\documentclass[11pt]{amsart}
\usepackage{amssymb,amscd,amsthm,hyperref,cleveref}
\usepackage[shortlabels]{enumitem}
\usepackage[margin=0.8in]{geometry}
%\usepackage{biblatex}

%\usepackage{tikz}
%\usetikzlibrary{chains}
% \usepackage{epic}
%\renewcommand{\thesubsection}{\arabic{subsection}}

% \tikzset{node distance=2em, ch/.style={circle,draw,on chain,inner sep=2pt},chj/.style={ch,join},every path/.style={shorten >=4pt,shorten <=4pt},line width=1pt,baseline=-1ex}

%\renewcommand*{\bibliofont}{\normalsize}

% \newcommand{\alabel}[1]{%
%   \(\alpha_{#1}\)
% }

% \newcommand{\dnode}[2][chj]{%
% \node[#1,label={below:\alabel{#2}}] {};
% }

% \newcommand{\dnodebr}[1]{%
% \node[chj,label={below right:\alabel{#1}}] {};
% }

\usepackage{colortbl}
%\definecolor{Gray}{gray}{0.9}
\definecolor{lightgray}{rgb}{0.9,0.9,0.9}

%\hyphenation{Gro-mov Gro-mov-Witten}
%\newcommand{\coh}{co\-ho\-mol\-o\-gy}

%\setlength{\textheight}{9.5in}
%\setlength{\textwidth}{5.5in}
%\voffset=-.9in
%\hoffset=-.7in
%\vspace*{-1.2in}

%\usepackage{xypic}
%\usepackage[all,cmtip]{xy}

\hyphenation{con-struct-ed glob-al}

\def\co{\colon\thinspace}

\newcommand\PP{{\mathbb{P}}}
\newcommand\EE{\mathbb{E}}
\newcommand\cV{{\mathcal{V}}}
\newcommand\cA{{\mathcal{A}}}

\newcommand\ZZ{\mathbb{Z}}

\let\div\Div

\DeclareMathOperator{\Ber}{Ber}

\DeclareMathOperator{\KS}{KS}

\DeclareMathOperator{\Pic}{Pic}
\DeclareMathOperator{\Proj}{Proj}
\DeclareMathOperator{\red}{red}
\DeclareMathOperator{\Res}{Res}

\DeclareMathOperator{\NS}{NS}

\DeclareMathOperator{\id}{id}

\newcommand{\CC}{\mathbb{C}}
\newcommand{\cMn}{\cM_{0,n,0}}
\newcommand{\cCn}{\cC_{0,n,0}}

\newcommand{\cC}{\mathcal{C}}
\newcommand{\cL}{\mathcal{L}}
\newcommand{\cM}{\mathcal{M}}
\newcommand{\cD}{\mathcal{D}}
\newcommand{\del}{\partial}

\newcommand{\e}{\mathfrak{e}}
\newcommand{\eps}{\varepsilon}

\newcommand{\cF}{\mathcal{F}}
\newcommand{\cT}{\mathcal{T}}

\newcommand{\cO}{\mathcal{O}}

\newcommand{\mc}[1]{\mathcal{#1}}
\newcommand{\vol}{[dz | d\zeta]}

\newcommand{\sKS}{super Kodaira-Spencer }

\newtheorem{thm}{Theorem}%[section]

\newtheorem{prop}[thm]{Proposition}
\newtheorem{lem}[thm]{Lemma}

\theoremstyle{definition}

\theoremstyle{remark}

\newtheorem{rem}{Remark}     

\author[S. L. Cacciatori]{Sergio Luigi Cacciatori}
\address{Universit\`a dell’Insubria, dipartimento di scienza e alta tecnologia, via Valleggio 11, 22100, Como, Italy and INFN,
via Celoria 16, 20100, Milano, Italy}
\author[S. Grushevsky]{Samuel Grushevsky}
\address{Department of Mathematics and Simons Center for Geometry and Physics, Stony Brook University, Stony Brook, NY 11794-3651}
\author[A. A. Voronov]{Alexander A. Voronov}
\address{School of Mathematics,
  University of Minnesota,
  Minneapolis, MN 55455 and
 Kavli IPMU (WPI), UTIAS, The University of Tokyo, Kashiwa, Chiba 277-8583, Japan
 }
\thanks{Research of the second author is supported in part by NSF grant DMS-21-01631. Research of the third author is supported in part by World Premier International Research Center Initiative (WPI Initiative), MEXT, Japan, 
and Collaboration grant \#585720 from the Simons Foundation. }

\title%[Tree-Level Neveu-Schwarz Superstring Amplitudes]
{Tree-Level Superstring Amplitudes: The Neveu-Schwarz Sector}

\date{February 22, 2024}

\begin{document}
\begin{abstract}
 We present a complete computation of superstring scattering amplitudes at tree level, for the case of Neveu-Schwarz insertions. Mathematically, this is to say that we determine explicitly the superstring measure on the moduli space $\cMn$ of super Riemann surfaces of genus zero with~$n \ge 3$ Neveu-Schwarz punctures. While, of course, an expression for the measure was previously known, we do this from first principles, using the canonically defined super Mumford isomorphism \cite{VorSuperMum}. We thus determine the scattering amplitudes, explicitly in the global coordinates on $\cMn$, without the need for picture changing operators or ghosts, and are also able to determine canonically the value of the coupling constant. Our computation should be viewed as a step towards performing similar analysis on $\cM_{0,0,n}$, to derive explicit tree-level scattering amplitudes with Ramond insertions.
\end{abstract}

\maketitle

\tableofcontents

\section{Introduction}
Bosonic string scattering amplitudes can be mathematically defined as integrals over the space of all maps from Riemann surfaces with marked points to a given spacetime manifold. To obtain an explicit expression for such an integrand, one proceeds from first principles of string theory, but of course eventually has to choose local coordinates on the moduli space.
% of Riemann surfaces (for the case of Euclidean space-time, the integral reduces to an integral over the moduli space of Riemann surfaces, rather than the moduli space of maps). 
This construction in particular relies on the \emph{Mumford isomorphism} $\det(\EE_1)^{\otimes 13} \cong \det(\EE_2)$ over the moduli space $M_g$ of genus $g > 1$ (bosonic) Riemann surfaces. Here $\EE_k\to M_g$ is the universal vector bundle of holomorphic $k$-differentials, whose fiber over a Riemann surface $X \in M_g$ is $H^0(X, \omega^{\otimes k})$, where $\omega = \cT^*_X$ is the holomorphic cotangent bundle of $X$. Algebro-geometrically, since $\EE_2$ is the cotangent bundle $\cT^*_{M_g}$, the Mumford isomorphism gives the expression $\omega_{M_g}=c_1( \cT^*_{M_g}) = \det (\EE_2) \cong \lambda_1^{\otimes 13}$ for the canonical bundle $\omega_{M_g}$ of the moduli space, where $\lambda_1:=c_1(\EE_1) = \det (\EE_1) \in \Pic (M_g)$. Since each space-time dimension contributes $\lambda_1^{\otimes 1/2}$ to the integrand, in particular the Mumford isomorphism implies that bosonic string theory could only make sense in 26 dimensions. We note that the bosonic Mumford isomorphism is a statement that two vector bundles are isomorphic, but it does not give a canonical isomorphism between them. 

Bosonic string amplitudes were extensively studied in the 1980s, and various expressions for them were obtained in arbitrary genus, see \cite{Belavin-Knizhnik,d'hoker-phong-86} and also the more recent literature \cite{Jost's book,Witten's notes} summarizing these results and providing further references. At tree level (i.e.~no loops, i.e.~genus zero Riemann surfaces) explicit expressions for bosonic string amplitudes are well-known, see \cite{polchinskibook1}.

\smallskip
The mathematical foundations of the superstring scattering amplitudes
were reviewed and revisited by Witten \cite{Witten}, with further
mathematical underpinnings developed in \cite{FKP1,FKP2,
  BruzzoPolishchukRuiperez}. Superstring scattering amplitudes can be
mathematically defined as integrals over the space of all maps from
super Riemann surfaces to a given space-time supermanifold. Recall that a \emph{super
Riemann surface $($SRS$)$} is a complex supermanifold~$X$ of complex
dimension~$(1|1)$, i.e.~such that local charts at any point of~$X$ are
given by one even and one odd coordinate, together with a
\emph{maximally non-integrable, odd distribution} $\cD\subset \cT_X$,
i.e.~a distribution $\cD$ of rank $(0|1)$ such that the Lie bracket of
super vector fields induces an isomorphism
\[
[-,-] \co \cD \otimes \cD \xrightarrow{\sim} \cT_X/\cD.
\]
This yields a canonical short exact sequence
\begin{equation}
\label{tangentsequence}
    0\to \cD\to \cT_X\to\cD^{\otimes 2}\to 0.
\end{equation}
The \emph{super
Mumford isomorphism} over the moduli space $\cM_g$ of SRS of genus $g
> 1$ is the isomorphism
\begin{equation}
\label{superMumford}
\Ber(\EE_{1/2})^{\otimes 5} = \Ber(\EE_{3/2})\,,
\end{equation}
where $\EE_{1/2}$ is now the virtual coherent sheaf over $\cM_g$, with
fiber $H^0(X, \omega) - H^1(X, \omega)$ over $X\in\cM_g$, where $\omega
:= \Ber \cT^*_X = \cD^*$ now denotes the {\em canonical bundle} of the
SRS $X$, while $\EE_{3/2}$ denotes the bundle with fibers $ H^0(X,
\omega^{\otimes 3})$ (for which the $H^1$ vanishes automatically), and
$\Ber$ denotes the Berezinian, also known as
superdeterminant.\footnote{Recall that on the bosonic Riemann surface
$X_{\red}$ underlying $X$, $\omega$ restricts to a square root of the
canonical bundle of $X_{\red}$, so it makes sense to talk of $1/2$-
and $3/2$-differentials, by taking this choice of the line bundle
$\omega_{X_{\red}}^{\otimes 1/2}$.} Since the cotangent bundle to $\cM_g$
is $\EE_{3/2}$, this gives
\begin{equation}
\label{volumeforms}
\Ber(\EE_{1/2})^{\otimes 5} = \omega_{\cM_g}\,,
\end{equation}
which again is the reason why superstring theory can only be consistent in $10=5\cdot 2$ dimensions (see below and \cite{Manin.1986.criticaldim, FKP1, FKP2, Witten} for more details). Crucially, unlike in the bosonic case, the super Mumford isomorphism \eqref{superMumford} is not just a statement that two line bundles over~$\cM_g$ are abstractly isomorphic, but gives a canonical isomorphism between them, see~\cite{VorSuperMum}.

\smallskip
While the moduli space $M_g$ of genus $g>1$ bosonic Riemann surfaces is a complex Deligne-Mumford stack of dimension~$3g-3$, the moduli space $\cM_g$ of genus $g>1$ SRS is a complex superstack of dimension $(3g-3\,|\,2g-2)$. For $g=0$ (resp.~1), to get a Deligne-Mumford stack as the moduli space of bosonic Riemann surfaces, one needs to consider Riemann surfaces together with at least $n = 3$ (resp., $n = 1$) marked points, whence $\dim M_{0,n}=n-3;\ \dim M_{1,n}=n$. Unlike the bosonic case, where there is only one notion of a marked point on a Riemann surface (which is a codimension $1$ complex subspace, and equivalent to considering a puncture), there are two differen notions of markings or punctures
for super Riemann surfaces. A Neveu-Schwarz (NS) marking on a super Riemann surface~$X$ is just a point of~$X$, i.e.~a subspace of complex dimension $(0|0)$ and codimension $(1|1)$, which plays no role for the distribution $\cD$. On the other hand, a Ramond (R) marking is a subspace of codimension $(1|0)$, over which the maximal non-integrability of the distribution $\cD$ fails in a controlled way \cite{Witten, Diroff, Ott-Voronov, Donagi-Ott}. 

We denote by $\cM_{g,n,k}$ the moduli of genus~$g$ super Riemann surfaces with $n$ Neveu-Schwarz markings and~$k$ Ramond markings, and then for $g>1$ it is a superstack of dimension $(3g-3+n+k\, |\, 2g-2+n+\tfrac{k}2)$. To write an explicit expression for superstring scattering amplitudes, one needs to choose local coordinates on $\cM_{g,n,k}$. We recall that locally any (smooth or complex) supermanifold is split, i.e.~is isomorphic to the exterior algebra of a vector bundle on its underlying (bosonic) manifold. Thus in a sufficiently small coordinate patch, $\cM_{g,n,k}$ is modelled on a vector bundle (quotiented by a finite group, as we are working with orbifolds) over the moduli space $SM_{g,n+k} = (\cM_{g,n,k})_{\red}$ of Riemann surfaces with a spin structure. However, as was shown by Donagi-Witten \cite{Donagi-Witten} (see also Donagi-Ott \cite{Donagi-Ott} for the case with Ramond markings), for high enough genus, globally $\cM_{g,n,k}$ is {\em not} split and does not even admit a global holomorphic projection to its reduced bosonic moduli space $SM_{g,n+k}$. In physics literature \cite{ListTextbooksOnStrings}, to write formulas on $\cM_{g,n,k}$, one often works in local coordinates, and then uses picture changing operators (PCOs) to change to a different coordinate chart. To deal with the choices involved, one introduces ghosts, and then integrates over them as well. In particular, this approach led to spectacular computations of superstring scattering amplitudes in genus 2 by D'Hoker and Phong \cite{d'hoker-phong-2000s}.

\smallskip
Our focus in this paper is the moduli space of genus~$0$ super Riemann
surfaces with only NS markings: $\dim \cMn=(n-3\,|\,n-2)$, for
$n\ge 3$, since the dimension of the automorphism group
$G:=\operatorname{OSp}(1|2)^0$ of the super projective line
$\PP^{1|1}$ with the odd distribution $\cD\subset\cT_{\PP^{1|1}}$ is equal
to $(3|2)$. Then~$\cMn$ is the moduli space of
$n$-tuples of labeled pairwise distinct points
$(z_1|\zeta_1),\dots,(z_n|\zeta_n)\in\PP^{1|1}$, called \emph{NS
punctures}, up to the diagonal action of $G$. The group $G$ acts
transitively on triples of distinct bosonic points, and then further
on pairs of odd coordinates. Moreover, fixing the values of
$z_1,z_2,z_3,\zeta_1,\zeta_2$ to be $z_1^0,z_2^0,z_3^0,\zeta_1^0,\zeta_2^0$ for some distinct values~$z_1^0,z_2^0,z_3^0\in\PP^1$
gives a point in each $G$-orbit in the configuration space of $n$-tuples of distinct
points of~$\PP^{1|1}$. Each $G$-orbit contains a unique element with these values, up to the action of the involution, which for the choice of $\zeta_1^0=\zeta_2^0=0$ is the element of $G$ that fixes the bosonic coordinate and  changes the sign of the odd coordinate on $\PP^{1|1}$, see \cite[\S\ 2.12]{Manin.1991.ting} and
\cite[\S\ 5.1.2]{Witten.2019.nosrsatm}. Thus, similarly to the bosonic
case, there are {\em global} coordinates $(z_4,\dots,z_n
\,|\,\zeta_3,\dots,\zeta_n)$ on $\cMn$, which exhibit this
supermoduli space as a dense open subset of the superorbifold
$\CC^{0|1}\times (\CC^{1|1})^{n-3}/\ZZ_2 \cong \CC^{n-3 \, | \,
  n-2}/\ZZ_2 $ (provided one of $z_1^0,z_2^0,z_3^0$ is set to~$\infty$), where $\ZZ_2$ acts
by the involution that changes the signs of all odd coordinates
simultaneously. Therefore it should be possible to write a global
expression for the tree-level superstring amplitudes, for the case of only NS punctures, explicitly in
these global coordinates, from first principles, without introducing
ghosts or picture changing operators. 

When we talk about integrals,
such as \eqref{eq:main}, over $\cMn$, we treat them as improper
integrals over a noncompact domain and ignore the isotropy group
$\ZZ_2$, since its only possible contribution to the integral could be a factor of~$2$. The choice of whether to divide or multiply by
2 is left to the future reader, as the additivity properties of the
Berezin integral over superstacks are not yet well understood.

The first principles we mentioned above are canonical isomorphisms,
such as the super Mumford isomorphism \eqref{superMumford}, which
simplifies in the case $g=0$. Over $\cMn$, the super vector bundle $\EE_{1/2}$ trivializes canonically, and thus and thus so does $\Ber (\EE_{1/2})$,
because $H^0 (\PP^{1|1}, \omega) = 0$ and $H^1(\PP^{1|1}, \omega) =
H^0 (\PP^{1|1}, \cO_{\PP^{1|1}})^*$ is canonically isomorphic to
$\CC$. However, in the presence of NS markings, the canonical bundle
$\omega_{\cM_{g,n,0}}$ is identified with $\Ber (H^0(\omega^{\otimes
  3}(\NS)))$, where $\NS$ is the total NS divisor, rather than with $\Ber
(\EE_{3/2})$, as in the case of no markings. With this change, the
super Mumford isomorphism \eqref{volumeforms} gets modified to a
canonical isomorphism 
\[
\Ber (\EE_{1/2})^{\otimes 5} \otimes \Ber (H^0(X, \omega^{\otimes 3}(\NS)|_{\NS})) = \omega_{\cM_{g,n,0}} \, ,
\]
see \cite{Witten, Diroff}. In the $g =0$ case, this simplifies to
\begin{equation}
\label{NS-Mumford}
\Ber (H^0(X, \omega^{\otimes 3}(\NS)|_{\NS})) = \omega_{\cMn} \, .
\end{equation}
It is known that the line bundle $\omega^{\otimes 3}(\NS)$ above may be replaced by any other odd line bundle via the canonical isomorphism \eqref{core}, \cite{VorSuperMum, Diroff}. As a consequence, we get the canonical isomorphism
\begin{equation}
\label{NS-Mumford1}
\Ber (H^0(X, \omega|_{\NS})) = \omega_{\cMn}\, ,
\end{equation}
which will be used in the main theorem below.

In this paper we perform the computation of the tree-level NS scattering amplitudes from first principles, not surprisingly recovering the known formulas \cite{Friedan-Martinec-Shenker1, Friedan-Martinec-Shenker2, Knizhnik, etc} from the physics literature. The main ingredient is the canonical determination of the holomorphic measure on $\cMn$, which we deduce by using the canonical super Mumford isomorphism; this also allows us to canonically fix the coupling constant. We show that the choice of left-moving, conformal vertex-operator insertions produces a section
\begin{equation}
\label{VO}
%V(z_1, \dots, z_n | \zeta_1, \dots, \zeta_n)
\left\langle
     \bigotimes_{j=1}^n V_j(z_j | \zeta_j) \Big[ \, \zeta [dz | d\zeta]\vert_{\NS_j} \Big\vert [dz | d\zeta]\vert_{\NS_j} \Big]   \right\rangle
\end{equation}
of the line bundle $\Ber (H^0(X, \omega|_{\NS}))$ and compute its image under the isomorphism \eqref{NS-Mumford1}. Here we have uniformized notation and assumed $z_j = z_j^0$ for $j \le 3$ and $\zeta_j = \zeta_j^0$ for $j \le 2$.
% In \eqref{eq:main} 
The vertex operators $V_j$ represent the incoming and outgoing particle states.
% indicate which observables one measures or which particles one scatters. \textbf{Sasha: This sounds a little awkward to me. I would like to say something like``which indicate the observables being inserted or particles being scattered,''  but ``inserted'' or ``plugged in'' can be replaced with a better word, which I miss.}
These are vector-valued functions, and the notation $\langle\ \rangle$ stands for the \emph{vacuum expectation value $(\!$VEV$\,)$}, i.e. $\langle O \rangle=\langle \Omega|O\Omega \rangle$, where $\Omega$ is the vacuum state, $O$ any given operator
having $\Omega$ in its domain, and $\langle | \rangle$ denotes the inner product on the Hilbert space of physical states.

\begin{thm}
\label{1}
%\begin{sloppypar}
The choice of 
conformal vertex-operator insertions \eqref{VO}
%standard affine coordinates $(z | \zeta)$ on $\PP^{1|1}$
gives for any $X\in\cMn$ a section
% basis \eqref{NS|NS}
of $\Ber H^0(X, \linebreak[0] \omega|_{\NS})$.
%, and trivializes the line bundle $\Ber (H^0(X,
% \omega^{\otimes 3}(\NS)|_{\NS}))$ globally over $\cMn$.
Then the canonical
% super Mumford
isomorphism
\eqref{NS-Mumford1} gives a holomorphic section of the line bundle $\omega_{\cMn}$ over the dense
open subset of $\cMn$ defined by forcing each NS puncture to stay
within the open chart with bosonic coordinate $z \ne \infty$. This 
% holomorphic measure is a
section 
% of $\omega_{\cMn}$, which
can be expressed in global
coordinates $(z_4, \dots, z_n \, | \, \zeta_3, \dots \zeta_n)$ on
$\cMn$ as

%\end{sloppypar}
\begin{equation}
  \label{main}
  \left\langle  \prod_{j=1}^n V_j(z_j | \zeta_j) \right\rangle d\nu \, ,
\end{equation}
where
\begin{equation*}
%\label{d-nu}
% d \nu = \left(- \frac{1}{2} \right)^{n-2}
% (z^0_3-z^0_1)(z^0_3-z^0_2)\left(1 -\frac 12
% \frac{\zeta_3\zeta^0_1}{z^0_3-z^0_1} - \frac 12
% \frac{\zeta_3\zeta^0_2}{z^0_3-z^0_2}\right) [dz_4\dots
% dz_n\,|\,d\zeta_3\dots d\zeta_n]\,,
d \nu = - \tfrac{1}{2^{n-2} } \left(z_3-z_1-\tfrac 12
\zeta_3\zeta_1\right) \left(z_3-z_2 - \tfrac 12
\zeta_3\zeta_2\right) [ \, dz_4\dots dz_n\,|\,d\zeta_3\dots d\zeta_n]
\end{equation*}
and $z_j = z^0_j$, $j=1,2,3$, and $\zeta_a = \zeta^0_a$, $a=1,2$, are fixed finite values $($for any concrete purposes we can set $\zeta^0_1=\zeta^0_2=0)$.

If instead we set $z_1=\infty$, or more concretely, $(w_1 | \eta_1)
= \left(\tfrac{1}{z_1} \Big\vert \tfrac{\zeta_1}{z_1}\right) = ( 0 \, | \, \eta^0_1)$, and
% then
% basis \eqref{NS|NSlim} trivializes the line bundle $\Ber (H^0(X, \omega^{\otimes 3}(\NS)|_{\NS}))$, and
% for
replace the vertex-operator insertion $\left\langle V_1(z_1 | \zeta_1) \big[ \, \zeta [dz | d\zeta]\vert_{\NS_1}\, \mid\, [dz | d\zeta]\vert_{\NS_1} \big] \right\rangle$ with
\[
\left\langle V_1 (0|\eta_1) \big[ \, \eta [dw | d\eta]\vert_{\NS_1} \mid [dw | d\eta]\vert_{\NS_1} \big] \right\rangle \, ,
\]
then the corresponding canonical
% holomorphic measure
section of $\omega_{\cMn}$ on the whole supermoduli space $\cMn\subset  \CC^{n-3 \, | \, n-2}/\ZZ_2$ becomes
\begin{equation}
  \label{Ber3}
  \left\langle V_1 (0|\eta_1)
     \prod_{j=2}^n V_j(z_j | \zeta_j) \right\rangle d\nu \, ,
\end{equation}
where
\begin{equation*}
% d \nu = \left(- \frac{1}{2} \right)^{n-2} (z^0_3-z^0_2)\left(1-\frac 12 \frac{\zeta_3\zeta^0_2}{z^0_3-z^0_2}\right) [dz_4\dots dz_n\,|\,d\zeta_3\dots d\zeta_n]\,.
 d \nu = \tfrac{1}{2^{n-2}}\left( 1 + \tfrac 12 \zeta_3 \eta_1 \right)
\left(z_3-z_2 -\tfrac 12 \zeta_3\zeta_2 \right) [\, dz_4\dots dz_n\,|\,d\zeta_3\dots d\zeta_n]
\end{equation*}
and $\eta_1 = \eta_1^0$, $z_j = z^0_j$, $j=2,3$, and $\zeta_2 = \zeta^0_2$.
\end{thm}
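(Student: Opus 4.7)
The plan is to execute the canonical isomorphism \eqref{NS-Mumford1} explicitly on the section \eqref{VO} and read off the resulting volume form on $\cMn$. I would first recognize \eqref{VO} as a concrete Berezinian element of $\Ber H^0(X, \omega|_{\NS})$: each NS puncture contributes a $(1|1)$-dimensional piece to $H^0(X, \omega|_{\NS})$, spanned by $[dz|d\zeta]|_{\NS_j}$ (odd) and $\zeta [dz|d\zeta]|_{\NS_j}$ (even) once one identifies $\NS_j$ with its codimension-$(1|0)$ formal thickening, and the bracket $[\zeta [dz|d\zeta]|_{\NS_j} \, | \, [dz|d\zeta]|_{\NS_j}]$ is precisely the Berezinian of this ordered basis. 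Tensoring over $j = 1, \dots, n$ yields the section of $\Ber H^0(X, \omega|_{\NS})$ displayed in \eqref{VO}.

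I would then unpack the canonical isomorphism \eqref{NS-Mumford1} following \cite{VorSuperMum, Diroff}. At $g=0$ it arises from the short exact sequence
\begin{equation*}
0 \to \omega \to \omega(\NS) \to \omega(\NS)|_{\NS} \to 0
\end{equation*}
on $\PP^{1|1}$, combined with $H^0(\PP^{1|1}, \omega) = 0$ and the canonical identification $H^1(\PP^{1|1}, \omega) \cong \CC$. The resulting cohomology sequence identifies $\Ber H^0(X, \omega|_{\NS})$ with $\Ber H^0(\omega(\NS))$, up to this one-dimensional trivial factor. The super Kodaira-Spencer map then identifies $H^0(\omega(\NS))$ with the cotangent fiber at $X$ of the $G$-quotient of the configuration space of $n$ labeled NS points on $\PP^{1|1}$: a natural basis is provided by the superconformal partial-fraction sections $[dz|d\zeta]/(z - z_j - \zeta \zeta_j)$ together with their odd partners, whose Kodaira-Spencer duals are the coordinate vector fields $\partial/\partial z_j$ and $\partial/\partial \zeta_j$. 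Computing the image of \eqref{VO} under \eqref{NS-Mumford1} then reduces to expanding the global coordinate Berezinian $[dz_4 \cdots dz_n \, | \, d\zeta_3 \cdots d\zeta_n]$ in this basis and evaluating a finite super-determinant.

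The main obstacle will be the careful bookkeeping of super-signs, parities, and normalizations in this super-determinant, together with the quotient by the $G = \operatorname{OSp}(1|2)^0$-action that gauge-fixes $z_1, z_2, z_3, \zeta_1, \zeta_2$. The super-Jacobian of this gauge-fixing against a natural basis of $\g = \operatorname{osp}(1|2)$ produces precisely the superconformal distance factors $z_3 - z_1 - \tfrac{1}{2}\zeta_3 \zeta_1$ and $z_3 - z_2 - \tfrac{1}{2}\zeta_3 \zeta_2$, while the prefactor $-1/2^{n-2}$ comes from tracking the normalizations of the generators of $\operatorname{osp}(1|2)$ together with the convention of the residue map $H^0(\omega(\NS)|_{\NS}) \to H^1(\omega) \cong \CC$. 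Once the formula is established in the $z_1 \ne \infty$ chart, the $z_1 = \infty$ version follows from the standard change of coordinates $(w, \eta) = (1/z, \zeta/z)$ on $\PP^{1|1}$, under which $[dz|d\zeta] = -w^{-1}[dw|d\eta]$ and the superconformal distance $z_3 - z_1 - \tfrac{1}{2}\zeta_3\zeta_1$ rescales to $1 + \tfrac{1}{2}\zeta_3 \eta_1$.
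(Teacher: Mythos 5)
Your proposal goes wrong at its central step: the identification of the cotangent space of $\cMn$ via the super Kodaira--Spencer map. Deformations of a punctured genus-zero SRS are governed by the sheaf of superconformal vector fields vanishing on the punctures, $\cT^{sc}(-\NS)\cong\cD^{\otimes 2}(-\NS)$, so Serre duality identifies $\cT^*_{\cMn}\vert_X$ with $H^0(X,\omega^{\otimes 3}(\NS))$, \emph{not} with $H^0(X,\omega(\NS))$. The dimensions already rule out your version: $H^0(\omega^{\otimes 3}(\NS))$ has dimension $(n-3\,|\,n-2)=\dim\cMn$, while $H^0(\omega(\NS))$ has dimension $(n-1\,|\,n)$, and your proposed ``partial-fraction'' basis $[dz|d\zeta]/(z-z_j-\zeta\zeta_j)$ with odd partners has $2n$ elements, so it cannot be dual to the $n-3$ even and $n-2$ odd coordinate vector fields. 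Consequently the short exact sequence you invoke, $0\to\omega\to\omega(\NS)\to\omega(\NS)|_{\NS}\to 0$, is not the one underlying the canonical isomorphism \eqref{NS-Mumford1}: the paper uses $0\to\omega^{\otimes 3}\to\omega^{\otimes 3}(\NS)\to\omega^{\otimes 3}(\NS)|_{\NS}\to 0$, the canonical isomorphism \eqref{core} relating $B(\omega|_{\NS})$ to $B(\omega^{\otimes 3}(\NS)|_{\NS})$, and --- crucially --- the canonical trivialization of $B(\omega^{\otimes 3})$ by the super Mumford form $\mu$ of \eqref{superMum}, which is entirely absent from your argument (replacing it by ``$H^1(\omega)\cong\CC$'' loses exactly the ingredient that makes the normalization, and hence the coupling constant, canonical). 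Your route therefore never lands in $\omega_{\cMn}$, and no sign/parity bookkeeping can fix that.

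Beyond this structural error, the quantitative heart of the theorem is asserted rather than proved: you attribute the factors $-\tfrac{1}{2^{n-2}}\bigl(z_3-z_1-\tfrac12\zeta_3\zeta_1\bigr)\bigl(z_3-z_2-\tfrac12\zeta_3\zeta_2\bigr)$ to an $\operatorname{osp}(1|2)$ gauge-fixing Jacobian, but in the paper they arise from two explicit computations that constitute the bulk of the proof: first, that the image of $\mu$ tensored with the insertion Berezinian under $\varphi$ is the Berezinian of the basis \eqref{basisA} of $H^0(\omega^{\otimes 3}(\NS))$ (a transition-matrix Berezinian shown to equal $1$ in Appendix A), and second, the Berezinian of the dual super Kodaira--Spencer matrix $M_n$ (Lemma \ref{lem:BerMn}, proved in Appendix B via a \v{C}ech computation of $\KS$ and residue evaluations). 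Finally, the $z_1=\infty$ case is not just a coordinate change: one must track the factor $1/w_1$ in the transformation of the insertion $\bigl[\,\zeta[dz|d\zeta]\vert_{\NS_1}\mid[dz|d\zeta]\vert_{\NS_1}\bigr]$ and require the rescaled vertex operator to have a finite limit as $w_1\to 0$ (and note $[dw|d\eta]=w\,[dz|d\zeta]$, without the sign you wrote), which is how the limit of \eqref{main} produces \eqref{Ber3}.
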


%\begin{thm}
%\label{2}
%The tree-level amplitude for scattering $n \ge 3$ type II superstrings in the NS sector is given in coordinates $(z_4,\dots,z_n\,|\,\zeta_3,\dots,\zeta_n)$ by
%\begin{equation}\label{eq:main}
%    \cA_n= \int_{\cMn} \left\langle
%    % V_1(z_1|\zeta_1;\bar z_1|\bar \zeta_1) % V_2(z_2|\zeta_2;\bar z_2| \bar \zeta_2) 
%     \prod_{j=1}^n V_j(z_j|\zeta_j;\bar z_j|\bar \zeta_j) \right\rangle % d\mu
%     dm \,,
%\end{equation}
%where the bar indicates complex conjugation, and the measure factors into the holomorphic and antiholomorphic parts as
%\begin{equation}
%\label{dmu}
%% d\mu=
%dm =
%% g_c^{n-2}
%d\nu\cdot \overline{d\nu},
%\end{equation}
%% with $g_c$ being the closed-string coupling constant.
%where $d\nu$ is defined in Theorem~$\ref{1}$. Moreover, the closed-string coupling constant is $g_c = (1/2)^2 = 1/4$, and it is uniquely determined since 
%the super Mumford isomorphism is defined canonically, not just up to scaling. 
%\end{thm}
\begin{rem}
This theorem has an immediate application to physical observables.
% Recall (and see Sec. \ref{sec:amplitudes} for a review of the physics setup) that
Define the tree-level amplitude for scattering $n \ge 3$ type II superstrings in the NS sector as
% in coordinates $(z_4,\dots,z_n\,|\,\zeta_3,\dots,\zeta_n)$ is given by
\begin{equation}\label{eq:main}
    \cA_n= \int_{\cMn} \left\langle
    % V_1(z_1|\zeta_1;\bar z_1|\bar \zeta_1) % V_2(z_2|\zeta_2;\bar z_2| \bar \zeta_2) 
     \prod_{j=1}^n V_j(z_j|\zeta_j;\bar z_j|\bar \zeta_j) \right\rangle % d\mu
     dm \,,
\end{equation}
where the bar indicates complex conjugation, the vertex operators combine left- and right-movers, i.e., are smooth, rather than holomorphic, sections of the corresponding bundles,  and the measure is the product of the holomorphic and antiholomorphic factors:
\[
dm=d\nu\cdot \overline{d\nu} \, ,
\]
% with $g_c$ being the closed-string coupling constant.
with $d\nu$ defined in \Cref{1}. Thus we have gotten
% substituting $d\nu$ gives
the standard expression for the scattering amplitude, usually obtained in physics in an ad hoc way by introducing ghosts and picture changing operators. We review this process in \Cref{sec:amplitudes} to reconcile it with our formulas. However, our formulas have the advantage of also determining the closed-string coupling constant $g_c = (1/2)^2 = 1/4$ canonically, since the super Mumford isomorphism is defined canonically, not just up to scaling. 

% In \eqref{eq:main} $V_i$ are vertex operators, which represent the incoming and outgoing particle states.
% % indicate which observables one measures or which particles one scatters. \textbf{Sasha: This sounds a little awkward to me. I would like to say something like``which indicate the observables being inserted or particles being scattered,''  but ``inserted'' or ``plugged in'' can be replaced with a better word, which I miss.}
% These are vector-valued functions, and the notation $\langle\ \rangle$ stands for the vacuum expectation value $($VEV$)$, i.e. $\langle O \rangle=(\Omega|O\Omega)$, for $\Omega$ the vacuum state, $O$ any given operator
% having $\Omega$ in its domain, and $(|)$ the inner product in the Hilbert space of physical states.

% The holomorphic factor~$d\nu$ in \eqref{dmu} is given explicitly by 
% \begin{equation}\label{eq:measure}
%     d\nu=  (z^0_3-z^0_1)(z^0_3-z^0_2)\left(1-\frac 12 \frac{\zeta_3\zeta^0_2}{z^0_3-z^0_2}-\frac 12 \frac{\zeta_3\zeta^0_1}{z^0_3-z^0_1} \right) [dz_4\dots dz_n\,|\,d\zeta_3\dots d\zeta_n]\,,
% \end{equation}
% where $z^0_j$, $\zeta^0_a$, $j=1,2,3$, $a=1,2$, are fixed finite values $($for any concrete purposes we can set $\zeta^0_1=\zeta^0_2=0)$. 
% Since the super Mumford isomorphism is canonical, not just up to scaling, we furthermore determine that $g_c = (1/2)^2 = 1/4$. 

In \eqref{eq:main} we have omitted the contribution of the dilaton field $\phi$, which usually gives $g_c=\langle e^{\phi}\rangle$. If we were to include the dilaton field, the coupling constant would have to be adjusted to  $g_c=\frac 14 \langle e^{\phi}\rangle$.
\end{rem}
\smallskip
While the holomorphic forms \eqref{main} and \eqref{Ber3} in \Cref{1} are written as an expression in terms of the chosen global coordinates on $\cMn\hookrightarrow \CC^{n-3|n-2}/\ZZ_2$, of course, the actual
% form $d\nu$
forms \eqref{main} or \eqref{Ber3} are well-defined on the moduli space, independent of which coordinate system is chosen.
% The standard superstring foundations describe how to get the superstring scattering amplitudes from the measure, going from \Cref{1} to \eqref{eq:main}, as above, as we will review in \Cref{sec:amplitudes}.
%, and for reference and for completeness we review in \Cref{sec:4bosons} how this works explicitly for the simplest case of scattering four massless bosons. 
The advantage of our approach is that we are able to determine the holomorphic forms \eqref{main} and \eqref{Ber3} and measure $d\nu$ canonically from first principles, without any need for PCOs or ghosts --- including the determination of the coupling constant.

Beyond being a direct mathematical computation not involving ghosts or PCOs, we hope that the current paper may be of reference value in assembling together the many mathematical pieces of the construction. We also view our computations as a stepping stone towards performing similar
% the same
explicit computations for tree-level superstring scattering amplitudes with Ramond markings, where the procedure is quite different from the NS case \cite{Freeman-West}. For the Ramond case, Ott and the third author \cite{Ott-Voronov} have recently obtained an explicit geometric description of $\cM_{0,0,k}$, which should enable a similarly explicit determination of superstring scattering amplitudes.

\smallskip
{\bf Acknowledgments:} The second author is grateful to Luis
\'Alvarez-Gaum\'e for sharing his knowledge of the subject and for
literature references. The third author thanks Mikhail Kapranov and Katherine Maxwell for useful discussions and patience during the work on parallel projects.
The first and third authors would like to thank the Simons Center for
Geometry and Physics for hospitality in Spring 2023 during the
``Supergeometry and supermoduli'' program, where all the authors
started their attempt to reach a complete mathematical understanding
of genus zero superstring scattering amplitudes.

\section{Basic conventions and constructions with SRSs and their moduli}
\label{sec:foundation}

In this paper, we work with \emph{super Riemann surfaces $($SRSs$)$ of genus zero}, which all are
isomorphic to the standard SRS of genus
zero based on the projective superspace $\PP^{1|1} := \Proj \CC[u,v | \xi]$ over the
ground field $\CC$ of complex numbers, see \cite{Manin.1991.ting, Witten.2019.nosrsatm}. We will write~$X$ for such an SRS, when it is
essential to think of the \emph{SRS, or superconformal, structure}, i.e. an odd, maximally non-integrable distribution
$\cD\subset\cT_X$, see below. The supermanifold $\PP^{1|1}$
may be glued out of two affine open charts $\CC^{1|1}$ with
coordinates $(z|\zeta) = (v/u \, |\,  \xi/u)$ and $(w | \eta) = (u/v \, | \, \xi/v)$ by the gluing functions
\[
w = 1/z, \qquad \eta = \zeta/z.
\]
The \emph{odd distribution} $\cD$ defining the SRS structure is generated by the super vector field
\begin{equation}
    \label{D-zeta}
D_\zeta := \frac{\del}{\del \zeta} + \zeta \frac{\del}{\del z}
\end{equation}
in the chart $(z|\zeta)$ and the super vector field
\[
\frac{\del}{\del \eta} - \eta \frac{\del}{\del w}
\]
in the chart $(w|\eta)$. Note that $\tfrac 12 [D_\zeta, D_\zeta] = D_\zeta^2 = \del/\del z$, and this
ensures the \emph{maximal non-integrability} of $\cD$.

A remarkable feature of super Riemann surfaces is \emph{dualism between points and divisors}, making the geometry of SRSs akin to that of classical Riemann surfaces, where the fact that divisors are linear combinations of points plays a prominent role. In particular, Neveu-Schwarz (NS) punctures have their divisor avatars, and we will often switch between NS punctures and NS divisors. What creates this dualism is the structure distribution $\cD$. Given a point $p$ on a SRS $X$, we can define a complex subsupermanifold of dimension $0|1$ supported on $p$, i.e., a \emph{prime divisor}, with tangent space given by $\cD\vert_p$. In a family of once-punctured SRSs, represented by a universal SRS $\cC$ over the moduli space $\cM$ with a section $p: \cM \hookrightarrow \cC$, regarded as a subspace of the universal SRS $\cC$, the corresponding NS divisor is given as the total space,
% $\underline{\operatorname{Spec}}_{p} S (\cD^*|_{p})$
determined by its $\cO_{p}$-algebra $S(\cD^*|_{p})$ of functions,
of the odd line bundle $\cD|_{p}$ over ${p}$. The inclusion of this total space is given by the projection of the conormal bundle $\cT_{\cC/\cM}^*\bigm|_{p}$ of $p$ in $\cC$ to $\cD^*|_{p}$, which is the dual of the inclusion $\cD \hookrightarrow \cT_{\cC/\cM}$. Conversely, given a prime divisor, we can find a unique point in it so that the tangent space to the divisor at this point is given by the distribution $\cD$. If $f = 0$ is a local equation of a prime divisor and $D$ is a generator of $\cD$ near the divisor's support, then the equations $f =0$ and $Df = 0$ determine the corresponding point. In coordinates $(z \vert \zeta)$ in which the structure distribution is generated by $D_\zeta$, see \eqref{D-zeta}, the divisor corresponding to a point $(z_0 \vert \zeta_0)$ is given by the equation $z - z_0 - \zeta \zeta_0 = 0$.

% Throughout, we will work with SRSs of genus zero, which all are
% isomorphic to the standard SRS $\PP^{1|1}$. We will write~$X$ for such an SRS, when it is
% essential to think of the SRS structure, i.e. the odd distribution
% $\cD\subset\cT_X$.

%Definition of a SRS $X$, $\omega_X$ always denotes the canonical bundle, which is also the (relative) dualizing sheaf, in families, i.e. the Berezinian of the cotangent bundle (its restriction to $X_{\red}$ happens to be a
%square root of the canonical bundle on $X_{\red}$)

%NS and R punctures, $\cD^{\otimes2}=\cT_X/\cD(-R)$, induced by the exact sequence

%spin structures and non-splitness of supermoduli

% Dualism between points and divisors

% In general, if we have the moduli space of 
% KS map and SKS map, in principle

%$\operatorname{OSp(1|2)}$ and its action

We will work with families $X \to S$ of SRSs of genus zero with NS punctures over a base $S$. We will assume the base $S$ to be affine, just like the supermoduli space. We will still denote the total space of the family by $X$.
% For a sheaf $\cF$ on $X$, we will denote by $B(\cF):=\Ber H^*(\cF)$ the Berezinian of its cohomology over the superalgebra $H^0(S, \cO_S)$ of functions on the base, see details in the following paragraph.
We will also denote by $\Pi$ the parity change operator on a super vector space or sheaf of such.

The \emph{Berezinian $\Ber F$ of a free module} $F$ of finite rank over a supercommutative algebra is a rank-$(1|0)$ (or $(0|1)$, if $n$ is odd) free module, defined as the set $[e_1, \dots , \e_m \, | \, \eps_1, \dots, \eps_n]$ of equivalence classes of homogeneous bases $\{e_1, \dots , e_m \, | \, \eps_1, \dots, \eps_n\}$ of $F$ modulo the relation 
\[
[M e_1, \dots , M e_m \, | \, M\eps_1, \dots, M\eps_n] \sim (\Ber M) \cdot [e_1, \dots , \e_m \, | \, \eps_1, \dots, \eps_n]
\]
for each invertible linear automorphism $M$ of the module $F$. We also add the zero element to $\Ber F$ to make it a module. Here $\Ber M$ is the \emph{Berezinian, or superdeterminant}, of $M$. If, in a homogeneous basis, $M$ is represented by a block matrix
\[
M = \begin{pmatrix}
    A & B\\
    C & D
\end{pmatrix}
\]
with respect to the decomposition $M = M_0 \oplus M_1$ into the even and odd parts,
then
\[
\Ber M := \det(A-BD^{-1}C)\det(D)^{-1} = \det(A)\det(D-CA^{-1}B)^{-1}.
\]
The Berezinian of a map of free modules is multiplicative: $\Ber (M_1 M_2) = \Ber (M_1) \Ber (M_2)$. Given a short exact sequence of free modules
\[
0 \to F_1 \to F_2 \to F_3 \to 0,
\]
we also get a canonical isomorphism of the Berezinians of the modules:
\[
\Ber F_2 \xrightarrow{\sim} \Ber F_1 \otimes \Ber F_3.
\]
A standard argument extends the construction of $\Ber F$ to locally free modules and sheaves.
% The Berezinian of cohomology 

\emph{Line bundles} in the super setting will be divided into \emph{even} and \emph{odd} ones, depending on whether their rank is $1|0$ or $0|1$.

We will adopt the notation $\cL^k$ for the $k$'th tensor power $\cL^{\otimes k}$ of a line bundle $\cL$ for $k \in \ZZ$. We will routinely use abbreviated notation for cohomology as well. For example, $H^i(\omega^k)$ will denote the cohomology space $H^i(\cCn, \omega_{\cCn/\cMn}^{\otimes k})$, where $\cCn \xrightarrow{p} \cMn$ is the universal family of genus-$0$ SRSs with~$n$ NS punctures. In full generality, we should be working with the space $H^0(\cMn, R^i p_* \omega_{\cCn/\cMn}^{\otimes k})$, but it is indeed naturally  isomorphic to $H^i(\cCn, \omega_{\cCn/\cMn}^{\otimes k})$, given that $\cMn$ is affine, being the $\ZZ_2$-quotient of the complement of hyperplanes in $\CC^{0|1} \times (\CC^{1|1})^{n-4}$. Likewise, the Berezinian of cohomology $B(\cF) = \Ber H^*(\cF)$ will be identified with $\Ber H^0(\cMn, \Ber ( R^\bullet p_* \cF))$. See the details on the determinant and Berezinian of cohomology in \cite{knudsen-mumford:det, deligne:detcoh} and \cite{VorSuperMum, Diroff}, respectively.

% , and denote by $(z|\zeta)$ the standard affine coordinates on this
% $\PP^{1|1}$.
We will denote by $(z_1|\zeta_1),\dots,(z_n|\zeta_n)$ the coordinates
of the NS punctures.
% and write~$X$ for such an SRS, when it is
% essential to think of the SRS structure, i.e. of the odd distribution
% $\cD\subset\cT_X$.
% We denote by $p \co \cCn\to\cMn$ the universal family of genus~$0$ SRS with~$n$ NS markings.
To simplify notation, we will write $\omega$ for the \emph{relative
dualizing sheaf} $\omega_{\cCn/\cMn}$, and, when relevant, also for its restriction, also known as the canonical bundle, to
an individual SRS.

\section{The
% explicit form of the
super Mumford form
% isomorphism
}\label{sec:bases}
We recall that the super Mumford isomorphism is a canonical isomorphism $B(\omega^3)= B(\omega)^{5}$ over the supermoduli space $\cM_{g,0,0}$ for $g \ge 2$, \cite{deligne:lost,VorSuperMum}. Here we will use the same construction to establish a canonical isomorphism over $\cMn$: $B(\omega^3)= B(\omega)^{5}$. To start out, and also for further use, we construct natural bases for the spaces of global sections of the powers of the relative dualizing sheaf $\omega$.
  
\subsection{Bases of cohomology}
We denote by $t := z \vol^{-1} \in H^0(\cCn, \omega^{-1})$ the generating section of $\omega^{-1}$, and consider, for any positive integer~$k$, the following short exact sequence on $\cCn$:
\[
0 \to \omega^k  \xrightarrow{\cdot t} \omega^{k-1} \to \omega^{k-1}|_{\div t} \to 0.
\]
We make the maps in the associated long exact sequence of homology explicit for $k=1,2,3$, by first writing out the bases of the relevant spaces, and then identifying the maps on homology explicitly in these bases, so that $B_k(t)$ is the canonical isomorphism of the associated Berezinians $B(\omega^{k-1})\simeq \Pi B(\omega^k)\otimes B( \omega^{k-1}|_{\div t} )$. These, suitably modified in the presence of NS markings, will be used for the computation of the super Mumford form.

{\bf Case $k=1$.} Here we get the exact sequence (of cohomology of sheaves over $\cCn$) 
\[ 0=H^0(\omega)\to H^0(\cO)\to H^0(\cO|_{\div t})\to H^1(\omega)\to H^1(\cO)=0\]
The basis of $H^0(\cO)$ is $\{ 1 \, | \; \}$ --- this notation is to list the even basis vectors followed by the odd basis vectors (none in this case) after the dash. Since $H^i(\cO)=0$ for $i>0$, it follows that the Berezinian $B(\cO)=\Ber H^0(\cCn;\cO) = \det H^0(\cCn;\cO)$ has as basis the section $\Ber \{1\, |\;\} = \det\{1\, |\;\}$, which we will denote by $[1\, |\;]$. Using relative super Serre duality, we see that $H^1(\omega)=H^0(\cO)^*$ on each genus zero SRS $X$, which globalizes since we have global coordinates on the affine superscheme $\cMn$.
Explicitly, the Serre dual to the even basis element $1$ of $H^0(\cO)$ is given by $\left\{ \left. \tfrac{\zeta}{z} \vol \; \right| \ \right\}$, and we observe that $B(\omega)=(\Ber(H^1(\omega)))^{-1}=\Ber(H^0(\cO))=B(\cO)$ also has basis $[1 \, | \; ]$.

The basis of $H^0(\cO|_{\div t})$ is $\{ 1|_{\div t} \, | \; \zeta|_{\div t}\}$, and thus $[1|_{\div t} \, | \; \zeta|_{\div t}] \in B(\cO|_{\div t})$ is the generating section of that sheaf. We can thus compute
\[ B_1(t) := [1 \, | \; ] \otimes \Pi [1 \, | \; ] \otimes [1|_{\div t} \, | \; \zeta|_{\div t}]^{-1} \in B(\cO) \otimes \Pi B(\omega) \otimes B(\cO|_{\div t})^{-1}\,.\]

{\bf Case $k=2$.} Here the exact sequence of cohomology is
\[ 0=H^0(\omega^2)\to H^0(\omega)\to H^0(\omega|_{\div t})\to H^1(\omega^2)\to H^1(\omega)\to H^1(\omega|_{\div t})\to0\]
(where the last zero is simply for dimensional reasons). An explicit basis of $H^1(\omega^2) = H^0(\omega^{-1})^*$ can be written simply by thinking of these as suitable volume forms with poles:
\[
\left\{ \frac{\vol^2}{z} \; \left| \; \frac{\zeta \vol^2}{z}, \frac{\zeta \vol^2}{z^2} \right\} = \left\{ \frac{\zeta}{\vol} \; \right| \; \frac{1}{\vol}, \frac{z}{\vol} \right\}^*\,.
\]
Thus the generating section of the Berezinian $B(\omega^2)=B(\omega^{-1})$ is given by 
\[
\left[\frac{\zeta}{\vol} \; \left| \; \frac{1}{\vol}, \frac{z}{\vol}\right. \right]
\]
The basis of $H^0(\omega|_{\div t})$ is given by $\{ \zeta \vol|_{\div t} \; | \; \vol|_{\div t}\}$, %so that the section $[\zeta \vol|_{\div t} \; | \; \vol|_{\div t}] \linebreak[0] \in B(\omega|_{\div t})$. We 
and we finally compute
\begin{multline*}
B_2(t) := \Pi [1 \, | \; ]^{-1} \otimes \left[\frac{\zeta}{\vol} \; \left| \; \frac{1}{\vol}, \frac{z}{\vol}\right. \right]^{-1} \otimes \left[\zeta \vol|_{\div t} \; \left| \; \vol|_{\div t} \right. \right]\\
\in \Pi B(\omega)^{-1} \otimes B(\omega^2)^{-1} \otimes B(\omega|_{\div t}).
\end{multline*}

{\bf Case $k=3$.} In this case a basis of $H^1(\omega^3) = H^0(\omega^{-2})^*$ can be given as 
\[
\left\{ \zeta,\zeta z,\zeta z^2| z,z^2\right\} \cdot  \frac{\vol^3}{z^3}\,,
\]
(where from now on we write such bases by separating out the common factor, for readability)
and we have the generating section
\[
\left[\frac{\zeta\vol^3}{z}, \frac{\zeta\vol^3}{z^2}, \frac{\zeta\vol^3}{z^3}  \; \left| \; \frac{\vol^3}{z}, \frac{\vol^3}{z^2} \right. \right] \in B(\omega^3) = B(\omega^{-2}).
\]
Similarly by thinking of volume forms with poles, we see that $\{ \vol^2|_{\div t} \; | \; \zeta\vol^2|_{\div t}\}$ gives a basis of $H^0(\omega^2|_{\div t})$, and thus we finally compute
\begin{multline*}
B_3(t) := \left[\frac{\zeta}{\vol} \; \left| \; \frac{1}{\vol}, \frac{z}{\vol}\right. \right]\\
 \otimes \Pi \left[\frac{\zeta\vol^3}{z}, \frac{\zeta\vol^3}{z^2}, \frac{\zeta\vol^3}{z^3}  \; \left| \; \frac{\vol^3}{z}, \frac{\vol^3}{z^2} \right. \right]\\
 \otimes \left[ \vol^2|_{\div t} \; \left| \; \zeta\vol^2|_{\div t} \right.\right]^{-1}
\in B(\omega^2) \otimes \Pi B(\omega^3) \otimes B(\omega^2|_{\div t})^{-1}.
\end{multline*}

\subsection{A formula for the super Mumford form}
We are now ready to determine explicitly the super Mumford isomorphism, recalling that unlike the bosonic case, this is not just an abstract isomorphism of 
line bundles, but in fact the canonical isomorphism given by the element $\mu:=B_3(t)\otimes B_2(t)\otimes B_1(t)^{-2}\in B(\omega^3)\otimes B(\cO)^{-5}$, where to identify this product we recall that $B(\cO)$ is canonically isomorphic to $B(\omega)$. To obtain an expression for $\mu$, we manipulate the formulas for $B_1(t),B_2(t),B_3(t)$ from the previous section. We obtain
\[
B_2(t)\otimes B_1(t)^{-1} = \frac{ \left[\zeta \vol|_{\div t} \; \left| \; \vol|_{\div t} \right. \right] \otimes [1|_{\div t} \; | \; \zeta|_{\div t}]}{ [1 \, | \; ]^3  \otimes \left[\frac{\zeta}{\vol} \; \left| \; \frac{1}{\vol}, \frac{z}{\vol}\right. \right]}
\]
Recall that restricting any even line bundles $\cL$ and $\cM$ on a SRS~$X$ to the divisor $\div t$, we have canonical isomorphisms
\[
B(\Pi \cL|_{\div t}) = B(\cL|_{\div t})^{-1};\quad B(\cL|_{\div t}) = B(\cM|_{\div t})\,,
\]
where the latter identification is given by multiplying by any generating holomorphic section of $\cL^{-1} \otimes \cM$ in a neighborhood of $\div t$. Applying this to all the restrictions in the formula above, we compute
\[
B_2(t)\otimes B_1(t)^{-1} = \frac{1}{[1 \, | \; ]^3  \otimes \left[\frac{\zeta}{\vol} \; \left| \; \frac{1}{\vol}, \frac{z}{\vol}\right. \right]}  \in B(\cO)^{-3} \otimes B(\omega^2)^{-1}.
\]
By a similar simplification of the expression for $B_3(t)$ from the previous section, we obtain
\begin{multline*}
B_3(t)\otimes B_1(t)^{-1}
= \frac{\left[\frac{\zeta}{\vol} \; \left| \; \frac{1}{\vol}, \frac{z}{\vol}\right. \right] \otimes \left[\frac{\zeta\vol^3}{z}, \frac{\zeta\vol^3}{z^2}, \frac{\zeta\vol^3}{z^3}  \; \left| \; \frac{\vol^3}{z}, \frac{\vol^3}{z^2} \right. \right] }{[1 \, | \; ]^2 }\\
\in B(\omega^2) \otimes B(\omega^3)\otimes B(\cO)^{-2}.
\end{multline*}
Finally, noticing that the canonical section $[1\;|\;]$ defines a global trivialization of $B(\cO)$, and can thus be omitted in the formulas above, we obtain the following expression for the super Mumford form
\begin{equation}
\label{superMum}
\mu = B_3(t)\otimes B_2(t)\otimes B_1(t)^{-2} = \left[\tfrac{\zeta\vol^3}{z}, \tfrac{\zeta\vol^3}{z^2}, \tfrac{\zeta\vol^3}{z^3}  \; \left| \; \tfrac{\vol^3}{z}, \tfrac{\vol^3}{z^2} \right. \right]
\end{equation}
as a global section of the line bundle $B(\omega^3)\otimes B(\cO)^{-5} = B(\omega^3)$ over $\cMn$.

\section{The \sKS map over $\cMn$
%at tree level with NS punctures
%, in \v{C}ech cohomology
}\label{sec:sKS}
We now proceed to work out the super Mumford isomorphism for the case of genus 0 SRS with NS markings, by computing the natural isomorphism given by the \sKS map. 

The superanalog of the Kodaira-Spencer map in the case of deformation theory of supermanifolds is well-known \cite{Manin.1986.criticaldim, LeBrun.Rothstein.1988.moduli, FKP2}. In our case this will be a map from $\cT_X \cMn$, which is a super vector space of dimension $n-3|n-2$, to a suitable first homology. For our purposes it will be more natural to work with the dual map, as it is more explicit to write bases for spaces of sections~$H^0$ rather than for the first homology~$H^1$. We first recall the setup, and then compute the \sKS map explicitly, by using an explicit cover to get the class in \v{C}ech cohomology. 

\subsection{The construction}
For a SRS $X$, let $\cT^{sc}_X \subset \cT_X$ be the sheaf of
\emph{superconformal vector fields}, i.e.~vector fields that preserve
the distribution $\cD$, i.e.~of $v\in\cT_X$ such that
$[v,\cD]\subset\cD$. %This can be written out explicitly in
coordinates, see formula (12) in Katherine Maxwell's paper
https://arxiv.org/pdf/2002.06625.pdf. Note that $\cT^{sc}_X$ is {\em
  not} an $\cO_X$-module: the product $f\cdot v$ of $v\in\cT_X^{sc}$
and a (local holomorphic) function $f\in\cO_X$ may not lie in
$\cT^{sc}_X$. However, as a sheaf of abelian groups, $\cT^{sc}_X$ is
isomorphic to $\cD^{\otimes 2}$, and in fact this natural isomorphism
can be used to induce an $\cO_X$-module structure on $\cT^{sc}_X$, see
\cite{Maxwell}. We thus consider $\cT^{sc}_X(-\NS)\subset \cT^{sc}_X$
as the subsheaf of those vector fields that preserve $\cD$, and which
also do not move the
% marked points forming the reduced
NS divisor $\NS$.
% of Neveu-Schwarz marking. 

The previous discussion can be globalized to consider $\cD$ as a subsheaf of the relative tangent sheaf $\cT_{\cCn / \cMn} \subset \cT_{\cCn}$ and then 
%we consider globally
define the subsheaves $\cT^{sc}_{\cCn/\cMn}(-\NS) \subset \cT_{\cCn / \cMn}$ and $\cT^{sc}_{\cCn}(-\NS) \linebreak[0] \subset \cT_{\cCn}$
% $\cT^{sc}_{\cC}\subset\cT_{\cC}$
of vector fields that preserve the global $\cD$ and fix the global divisor $\NS$, which we think of as an unramified effective Cartier divisor.
% multisection $\NS \co \cMn\rightarrow \cCn$ of the projection $p \co \cCn \to \cMn$.  

Altogether, this yields the exact sequence of sheaves of vector spaces
$$
 0\to\cT^{sc}_{\cCn/\cMn}(-\NS)\to \cT^{sc}_{\cCn}(-\NS)\to p^*\cT_{\cMn}\to 0\,.
$$
%Warning: the middle term here is hard to interpret globally as an $\cO_{\cC}$-module, but we do not care.\\
% We restrict this sequence to an individual SRS $X\in\cMn$, and t
Then the \emph{\sKS isomorphism} is the connecting homomorphism
\begin{equation*}
%\label{Serre}
\cT_{\cMn} \to R^1 p_* \cT^{sc}_{\cCn/\cMn}(-\NS) = R^1 p_*\cD^{\otimes 2}(-\NS) 
\end{equation*}
induced by the isomorphism 
\begin{equation}
\label{sc}
\begin{split}
\cD^{\otimes 2} & \xrightarrow{\sim} \cT^{sc}_{\cCn/\cMn},\\
f D_\zeta \otimes D_\zeta & \longmapsto f\partial_z +\tfrac {(-1)^{\tilde f}}2 D_\zeta (f) D_\zeta,
\end{split}
\end{equation}
see \cite{Beilinson.Manin.Schechtman.1987, Manin.1988,RSV.1988,Maxwell}.
Taking global sections, we globalize the \sKS isomorphism to the isomorphism
\[
\KS: H^0 (\cMn, \cT_{\cMn} )\rightarrow H^1 (\cCn, \cD^{\otimes 2}(-\NS) ).
\]
The \emph{\sKS isomorphism over a given SRS} $X\in\cMn$ is
$$
\cT_{\cMn} \vert_X \to H^1( X, \cD \vert_X^{\otimes 2}(-\NS) )
% =H^0(X,\omega_X^{\otimes 3}(\NS))^*
.
$$
To identify $H^1 (\cCn, \cD^{\otimes 2}(-\NS) )$ further, we apply \emph{Serre duality}:
\begin{equation*}
R^1 p_*\cD^{\otimes 2}(-\NS) = p_* \omega_{\cCn/\cMn}^{\otimes 3}(\NS)^*,
\end{equation*}
which leads to the isomorphism
\[
\sigma \co H^1 (\cCn, \cD^{\otimes 2}(-\NS) ) = H^0 (\cCn, \omega^3 (\NS))^*.
\]
Combined with the \sKS isomorphism, this gives the dual isomorphism:
\[
p_* \omega_{\cCn/\cMn}^{\otimes 3}(\NS) \to \cT_{\cMn}^*,
\]
which globalizes to the \emph{dual \sKS isomorphism}
\[
\KS^* \co H^0 (\cCn, \omega_{\cCn/\cMn}^{\otimes 3}(\NS)) \to H^0 (\cMn, \cT_{\cMn}^*).
\]

\subsection{Proof of the main theorem}

The rest of the paper is largely dedicated to the explicit computation of the \sKS map and identification of bases for the spaces involved. Before we get into that, we would like to outline the proof of \Cref{1}. Consider the following short exact sequence (SES) of sheaves over $\cCn$, by restricting to the NS divisor:
% markings:
\begin{equation}
\label{eq:preM}
 0\to\omega^3{\rightarrow}\omega^3(\NS){\rightarrow}\omega^3(\NS)|_{\NS}\to 0\,.
\end{equation}
This is the analog for a divisor with multiple components of the case $k=3$ of the SES described in \Cref{sec:bases}. This SES gives a canonical isomorphism of the Berezinian line bundles:
\[
B(\omega^3) \otimes B(\omega^3(\NS)|_{\NS}) \xrightarrow{\sim} B(\omega^3(\NS))\,,
\]
which we are going to identify explicitly. Since $H^0(\omega^3)=0$, and $H^1(\omega^3(\NS)|_{\NS})=0$, as the relative (topological) dimension of the $\NS$ divisor over the moduli space $\cMn$ is 0 and $\cMn$ is affine, the associated exact sequence in cohomology is simply
\begin{equation}
\label{eq:M}
    0\rightarrow H^0(\omega^3(\NS)) % \stackrel {g}
    {\rightarrow} H^0(\omega^3(\NS)\vert_{\NS}) % \stackrel {\del}
    {\rightarrow} H^1(\omega^3)\rightarrow 0\,.
\end{equation}
The choice of any even generating section of the line bundle $\omega^2(\NS)$ in an affine open neighborhood of the divisor $\NS$ yields an isomorphism
\begin{equation}
\label{core}
B(\omega|_{\NS}) \xrightarrow{\sim} B(\omega^3(\NS)|_{\NS}) \, ,
\end{equation}
which does not depend on the choice of the section and is thereby canonical, see \cite{VorSuperMum,Diroff}.
% the standard affine coordinates $(z | \zeta)$ on
% $\PP^{1|1}$ yields a trivialization of the middle term, see
% \eqref{NS|NS},
provided none of the NS punctures is at $\infty$, and
% hence a trivialization of
the Berezinian line bundle
$B(\omega^3(\NS)\vert_{\NS})$. Thus, we have a canonical isomorphism
\[
\varphi: B(\omega^3) \otimes B(\omega|_{\NS}) \xrightarrow{\sim} B(\omega^3(\NS))\,.
\]
% which is still canonical, given that we work with the standard copy of
% $\PP^{1|1}$, which comes with the choice of coordinates
% $(z|\zeta)$.

In \Cref{app:mapM} we provide further details and compute
the image of the super Mumford form $\mu \in \Gamma(\cMn, B(\omega^3))$, given by
$\eqref{superMum}$, tensored with a global section of $B(\omega|_{\NS})$ arising from vertex-operator insertions, under the isomorphism $\varphi$, see \Cref{prop}:
\begin{multline*}
%\label{basisA}
\varphi\left(\mu \otimes \left\langle
     \bigotimes_{j=1}^n V_j(z_j | \zeta_j) \Big[ \, \zeta [dz | d\zeta]\vert_{\NS_j} \Big| [dz | d\zeta]\vert_{\NS_j} \Big]   \right\rangle \right)\\
     = \left\langle
     \prod_{j=1}^n V_j(z_j | \zeta_j)   \right\rangle \Ber \left(\{\zeta,\zeta z,\dots, \zeta
z^{n-4}\,\vert\, 1,\dots, z^{n-3}\} \cdot \frac
{[dz|d\zeta]^3}{\prod_{j=1}^n (z-z_j-\zeta \zeta_j)} \right)\,,
\end{multline*}
where $(z_1 | \zeta_1)$, \dots, $(z_1 | \zeta_1)$ are the NS
punctures. In \Cref{VOI} we explain how vertex-operator insertions make up a section of $B(\omega|_{\NS})$. The Berezinian of the dual \sKS isomorphism, computed in
the remaining part of \Cref{sec:sKS}, with detailed linear algebra
computations given in \Cref{app:BerMn}, naturally identifies
$B(\omega^3(\NS))$ with the space $H^0(\cMn, \omega_{\cMn}) =
H^0(\cMn, \Ber (\cT_{\cMn}^*))$ of holomorphic volume forms on $\cMn$:
\[
\Ber (\KS^*) \co B(\omega^3(\NS)) \xrightarrow{\sim} H^0(\cMn,  \omega_{\cMn}).
\]

% Thus, we have a canonical isomorphism
% \[
% \psi := \Ber (\KS^*) \circ \varphi \co B(\omega^3) \otimes B(\omega|_{\NS}) \xrightarrow{\sim} H^0(\cMn, \omega_{\cMn})\,.
% \]
\Cref{prop2} below gives an explicit computation,
\[
d \nu := - \frac{1}{2^{n-2}} (z_3-z_1 - \tfrac 12 \zeta_3\zeta_1 ) (z_3-z_2 - \tfrac 12 \zeta_3\zeta_2 ) [dz_4\dots dz_n\,|\,d\zeta_3\dots d\zeta_n] \, ,
\]
of the image
\begin{equation*}
\Ber (\KS^*) \left(\Ber \left(\{\zeta,\zeta z,\dots,
\zeta z^{n-4}\,\vert\, 1,\dots, z^{n-3}\} \cdot \frac
      {[dz|d\zeta]^3}{\prod_{j=1}^n (z-z_j-\zeta \zeta_j)} \right)\right)
\end{equation*}
under the isomorphism $\Ber (\KS^*)$, when all the NS punctures are finite. and yields \Cref{1}.
% \begin{equation*}
%    d\nu = \psi(\mu) = - \frac{1}{2^{n-2}} (z_3-z_1 - \tfrac 12 \zeta_3\zeta_1 ) (z_3-z_2 - \tfrac 12 \zeta_3\zeta_2 ) [dz_4\dots dz_n\,|\,d\zeta_3\dots d\zeta_n],
% \end{equation*}
% when all the NS punctures are finite.
The case $z_1 = \infty$ is
treated in \Cref{infty}.
% \flushright
% \noindent
   % \qquad \qquad \qquad \qquad \qquad 
   \qed

% associated Berezinian to be simply equal to 1, proving
% \begin{lem}\label{lem:M}
%  Denote $M: H^0(\omega^3(\NS))\oplus H^1(\omega^3) \rightarrow H^0(\omega^3(\NS)|_{\NS})$ the isomorphism induced by \eqref{eq:M}. Then $\Ber M=1$.
% \end{lem}

\subsection{The \v{C}ech cohomology
computation
% of the Kodaira-Spencer map $\cT_{\cMn} \linebreak[1] \rightarrow \linebreak[0] H^1 (\cT^{sc}_X(-\NS))$
}
First, note that for computations with supermanifolds, it is enough to take an acyclic (Leray) cover of the underlying bosonic manifold. Below is a precise statement, which is surely known to seasoned supergeometeres, though we could not find it in the literature.

For a topological space $X$ and a sheaf $\cF$ on it, recall that an open cover $\{U_i\}$ is called \emph{Leray with respect to $\cF$} if for the derived-functor cohomology, we have $H^q (U_{i_1} \cap \dots \cap U_{i_p}, \cF) = 0$ for all finite sets $\{i_1, \dots , i_p\}$ of indices and all $q > 0$. In this case, the derived-functor cohomology $H^q(X, \cF)$ will agree with the \v{C}ech coholomogy of $\cF$ with respect to this cover, justifying using the same notation for both.

\begin{lem}[Folklore]
\begin{enumerate}[$(1)$]
    \item If $X$ is a separated superscheme of finite type and $\cF$ is a quasi-coherent sheaf on $X$, then any cover of the underlying scheme $X_{\red}$ by affine open schemes is Leray with respect to $\cF$.
    \item If $X$ is a complex supermanifold  and $\cF$ is a coherent sheaf on $X$, then any cover of the underlying manifold $X_{\red}$ by open Stein manifolds is Leray with respect to $\cF$.
\end{enumerate} 
\end{lem}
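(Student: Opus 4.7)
The plan is to reduce the super statement to the classical bosonic one via the nilpotent-ideal filtration of the structure sheaf, and then invoke Serre vanishing on affines for part (1) and Cartan's Theorem B on Stein manifolds for part (2).

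First I would use that for a superscheme or complex supermanifold $X$, the ideal $\mathcal{J} \subset \cO_X$ generated by odd local sections is nilpotent, with $\mathcal{J}^{N+1} = 0$, where $N$ is the odd dimension of $X$. The associated graded $\bigoplus_{k \ge 0} \mathcal{J}^k/\mathcal{J}^{k+1}$ is naturally a sheaf on $X_{\red}$, identified with the exterior algebra $\bigwedge^{\bullet}(\mathcal{J}/\mathcal{J}^2)$, and the odd conormal sheaf $\mathcal{J}/\mathcal{J}^2$ is locally free on $X_{\red}$. For any (quasi-)coherent sheaf $\cF$ on $X$, the $\mathcal{J}$-adic filtration $\cF \supset \mathcal{J}\cF \supset \dots \supset \mathcal{J}^{N+1}\cF = 0$ then has graded pieces $\mathcal{J}^k\cF/\mathcal{J}^{k+1}\cF$ that are (quasi-)coherent sheaves on $X_{\red}$.

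Next I would verify the Leray property on each finite intersection $U_{i_1} \cap \dots \cap U_{i_p}$ for these graded pieces, viewed as sheaves on the underlying space. In case (1), since $X$ (and hence $X_{\red}$) is separated, intersections of affine opens in $X_{\red}$ are affine, so Serre's theorem gives $H^q = 0$ for $q > 0$ applied to any quasi-coherent sheaf. In case (2), finite intersections of open Stein subsets of a Hausdorff complex manifold are Stein (a classical fact using holomorphic convexity and plurisubharmonic exhaustion), so Cartan's Theorem B gives the analogous vanishing for coherent sheaves. A straightforward induction on $k$, using the long exact cohomology sequences attached to
\begin{equation*}
0 \to \mathcal{J}^{k+1}\cF \to \mathcal{J}^k\cF \to \mathcal{J}^k\cF/\mathcal{J}^{k+1}\cF \to 0
\end{equation*}
restricted to each such intersection, then propagates the higher-cohomology vanishing up the filtration, starting from $\mathcal{J}^N\cF$ (which coincides with its own graded piece) and ending at $\cF = \mathcal{J}^0\cF$. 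This is exactly the Leray condition, after which the agreement of \v{C}ech with derived-functor cohomology is standard.

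The main point needing care is the identification of the graded pieces $\mathcal{J}^k\cF/\mathcal{J}^{k+1}\cF$ as honest (quasi-)coherent sheaves on $X_{\red}$, which relies on the (quasi-)coherence of $\cF$ together with the local freeness of $\mathcal{J}/\mathcal{J}^2$; and in the analytic case, one also needs the preservation of the Stein property under finite intersection in a Hausdorff ambient manifold. Both are standard facts, but they are precisely the steps where the super setting genuinely relies on the structure of the underlying bosonic space.
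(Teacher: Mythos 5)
Your proof is correct and follows essentially the same route as the paper: filter $\cF$ by powers of the ideal of odd nilpotents, note the graded pieces are (quasi-)coherent sheaves on $X_{\red}$, apply Serre vanishing on affine intersections (using separatedness) resp.\ Cartan's Theorem B on Stein intersections, and climb the filtration by the long exact sequences. The only cosmetic difference is that the paper isolates the vanishing statement for a superspace with affine/Stein reduction as a separate lemma and runs the induction there, whereas you run it directly on each intersection; also note that local freeness of $\mathcal{J}/\mathcal{J}^2$ is not actually needed, only quasi-coherence of the graded pieces.
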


This lemma follows from the following lemma.

\begin{lem}[Folklore]
\begin{enumerate}[$(1)$]
    \item If $X$ is a superscheme of finite type whose reduction $X_{\red}$ is affine and $\cF$ is a quasi-coherent sheaf on $X$, then $H^q(X, \cF) = 0$ for all $q > 0$.
    \item If $X$ is a complex supermanifold whose reduction $X_{\red}$ is Stein and $\cF$ is a coherent sheaf on $X$, then $H^q(X, \cF) = 0$ for all $q > 0$.
\end{enumerate} 
\end{lem}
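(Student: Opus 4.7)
The plan is to filter $\cF$ by powers of the nilradical $\mathcal{J} \subset \cO_X$ and reduce to the classical vanishing theorems on the underlying ordinary space $X_{\red}$: Serre's theorem for quasi-coherent sheaves on affine schemes in case $(1)$, and Cartan's Theorem B for coherent analytic sheaves on Stein manifolds in case $(2)$. The central observation is that $X$ and $X_{\red}$ share the same underlying topological space, so sheaf cohomology $H^q(X,-)$ depends only on the underlying sheaf of abelian groups and the topology; in particular, for any sheaf already annihilated by $\mathcal{J}$, viewed as an $\cO_{X_{\red}}$-module, its cohomology computed over $X$ agrees with that computed over $X_{\red}$.

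First I would set up the finite filtration
\[
\cF \supset \mathcal{J}\cF \supset \mathcal{J}^2\cF \supset \dots \supset \mathcal{J}^{N+1}\cF = 0,
\]
which terminates because $N$ can be taken to be the odd dimension of $X$ in the supermanifold case and the nilpotency index of the nilradical in the Noetherian finite-type superscheme case. The successive quotients $\mathcal{G}_k := \mathcal{J}^k\cF/\mathcal{J}^{k+1}\cF$ are annihilated by $\mathcal{J}$, so they carry a natural structure of sheaves of $\cO_X/\mathcal{J} = \cO_{X_{\red}}$-modules. Using the local splitting $\cO_X|_U \cong \cO_{X_{\red}}|_U \otimes \Lambda^\bullet(\xi_1,\dots,\xi_N)$ in the supermanifold setting, and the analogous affine-local description in the superscheme setting, I would verify that each $\mathcal{G}_k$ is quasi-coherent (resp.\ coherent) on $X_{\red}$ whenever $\cF$ is quasi-coherent (resp.\ coherent) on $X$; locally, $\mathcal{G}_k$ is identified with a finite direct sum of copies of $\cF/\mathcal{J}\cF$, indexed by degree-$k$ monomials in the odd variables.

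Next I would carry out a descending induction on $k$, using the long exact sequences attached to
\[
0 \to \mathcal{J}^{k+1}\cF \to \mathcal{J}^k\cF \to \mathcal{G}_k \to 0.
\]
The base case $\mathcal{J}^{N+1}\cF = 0$ is trivial, and in the inductive step one has $H^q(X,\mathcal{G}_k) = H^q(X_{\red},\mathcal{G}_k) = 0$ for all $q>0$ by Serre's theorem on the affine $X_{\red}$ in case $(1)$ and by Cartan's Theorem B on the Stein $X_{\red}$ in case $(2)$. Combined with the induction hypothesis on $H^q(X,\mathcal{J}^{k+1}\cF)$, the long exact sequence yields $H^q(X,\mathcal{J}^k\cF) = 0$ for all $k$ and all $q>0$; specializing to $k=0$ gives the desired vanishing $H^q(X,\cF) = 0$.

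The main technical obstacle I anticipate is the quasi-coherence/coherence bookkeeping along the filtration: one must confirm that each $\mathcal{J}^k\cF$ remains quasi-coherent (resp.\ coherent) under multiplication by $\mathcal{J}$, and that the associated graded pieces $\mathcal{G}_k$ really fall under the hypotheses of Serre's theorem or Cartan's Theorem B on $X_{\red}$. This is ultimately a local computation in the split local models, but one should also briefly justify the identification of sheaf cohomology of $\mathcal{G}_k$ on $X$ with that on $X_{\red}$, so that the classical vanishing results can be imported cleanly. Once these verifications are recorded, the rest of the proof is a routine diagram chase.
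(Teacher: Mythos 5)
Your proposal is correct and follows essentially the same route as the paper: filter $\cF$ by powers of the ideal of odd nilpotents, observe that the graded pieces are quasi-coherent (resp.\ coherent) $\cO_{X_{\red}}$-modules whose higher cohomology vanishes by Serre's affine vanishing (the algebraic Cartan B) resp.\ Cartan's Theorem B on the Stein reduction, and conclude by downward induction along the short exact sequences. The extra bookkeeping you flag (identifying cohomology over $X$ with cohomology over $X_{\red}$ for $\mathcal{J}$-annihilated sheaves, and coherence of the graded pieces) is exactly what the paper's brief proof implicitly uses, so there is no gap.
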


\begin{proof}
(1) Let $J_X = ((\cO_X)_1)$ be the ideal sheaf of odd nilpotents on the superscheme $X$. If $X$ is of finite type and $\cF$ is a quasi-coherent sheaf on it, then $\cF$ is a finite iterated extension by quasi-coherent sheaves $J_X^n \cF/J_X^{n+1} \cF$ of $\cO_{X_{\red}} = \cO_X/J_X$-modules, whose higher cohomology vanishes by the algebraic version of Cartan's theorem B, see \cite[Theorem III.3.7]{Hartshorne}. Therefore, by downward induction on $n$, the higher cohomology of $\cF$ on $X$ will also vanish.

(2) The same argument works, except that Cartan's theorem B itself is applied to the coherent sheaves $J_X^n \cF /J_X^{n+1} \cF$ on $X_{\red}$.
\end{proof}

\begin{rem}
By a straightforward super generalization of Serre's theorem on affineness, it follows that if $X$ is a superscheme of finite type whose reduction $X_{\red}$ is affine, then $X$ is affine as a superscheme.
\end{rem}

To make formulas more symmetric, we will perform the computation below
while setting the points $z_1^0,z_2^0,z_3^0$ to be finite. (The case
$z_1^0 = \infty$ is treated in \Cref{infty}.) Recall that while
all these 3 bosonic coordinates along with the corresponding fermionic
coordinates $\zeta_1^0$ and $\zeta_2^0$ can be fixed by the action of
$\operatorname{OSp}(1|2)^0$, the $\zeta_3$ remains free, and in
anticipation of this we will be taking one ``unnecessary'' open set,
$W_3$, to handle this.

We will 
% use the following cover of the underlying bosonic moduli space $(\cMn)_{\red} = M_{0,n}$ by open subsets to
compute the \sKS map
\[
\KS: H^0 (\cMn, \cT_{\cMn} )\rightarrow H^1 (\cCn, \cD^{\otimes 2}(-\NS))
\]
in \v{C}ech cohomology. Since the \sKS map is a morphism of sheaves over $(\cMn)_{\red} =  M_{0,n} $, we may compute it locally on $M_{0,n}$, which
% $M_{0,n}$
may be identified with an open subset of $\CC^{n-3}$. For any point $(z_4^0,\dots,z_n^0)\in M_{0,n}$, we can choose $\varepsilon>0$ sufficiently small so that the ball $B_{\tfrac \varepsilon2}(z_4^0,\dots, z_n^0)$ in $\CC^{n-3}$ of radius~$\varepsilon/2$ around this point is contained in $M_{0,n}$, which is to say that within this ball all the $n-4$ bosonic points stay distinct, and also distinct from $z_1=z^0_1$, $z_2=z^0_2$ and $z_3=z^0_3$.
% Letting the centers of the balls vary, we get an open cover of $M_{0,n}$.
Then, working over this ball in the universal family $p \co \cCn \to \cMn$ of SRSs, we cover the preimage $p_{\red}^{-1}(B_{\tfrac \varepsilon2}(z_4^0,\ldots,z_n^0))\subset (\cCn)_{\red}$
%in the universal family $p \co \cCn \to \cMn$ of SRSs
% (where the odd coordinates are arbitrary)
by the following open subsets.

First recall that the fibers of $p_{\red}$ are simply copies of $\PP^{1}$, since in the universal curve the point is allowed to coincide with one of $z_1^0,\dots,z_n^0$. We denote by $U_j\subset\CC$, $j=3,\ldots,n$, the open radius-$\varepsilon$ disk around $z_j^0$ (adjusting $\varepsilon$, if necessary,
% sufficiently small,
to make sure that the disks are all disjoint), and denote by $U_0$ the complement in $\PP^1$ of the union of
closed radius-$\varepsilon/2$
disks around these points. 
Then $\PP^1=U_0 \cup U_3\cup\dots \cup U_n$, and we now construct a suitable cover of $p_{\red}^{-1}(B_{\tfrac \varepsilon 2}(z_4^0,\ldots,z_n^0))$
% the universal family $\cCn$ over $B_{\tfrac \varepsilon 2}(z_4^0,\ldots,z_n^0)$
%, as follows.
% Denote $p_{\red}:\cMn\to M_{0,n}$, and then, denoting
by the open sets $W_\bullet:=U_\bullet\times B_{\tfrac \varepsilon2}(z_4^0,\ldots,z_n^0)$. We see that the open cover $\{ W_0, W_3, \dots , W_n\}$ is
% an open cover of $p_{\red}^{-1}(B_{\tfrac \varepsilon 2}(z_4^0,\ldots,z_n^0))$ by open subsets
such that the only nonempty intersections are the pairwise intersections between $W_0$ and $W_j$, since $\varepsilon$ was sufficiently small ($\varepsilon<{\rm min}\{|z_j-z_k|,\ j<k\}$). 
Moreover, these pairwise intersections are simply open annuli, and thus Stein, and thereby acyclic for coherent sheaves. We further take the open subsupermanifolds of $\cCn$ based on each of these open subsets.

We utilize the same coordinates $(z_4,\ldots,z_n|\zeta_3,\ldots,\zeta_n)$ on the superdomain in $\cMn$ over the common factor $B_{\frac \epsilon2}(z_4^0,\ldots,z_n^0)$ of all $W_j$'s and augment them with coordinates built on the standard affine coordinates $(z|\zeta)$
on the superdomains in
$\PP^{1|1}$ over the factors $U_j$, $3 \le j \le n$.
More precisely, we put 
% over $W_0$ the coordinates
% \begin{align*}
% (z_4,\ldots,z_n|\zeta_3,\ldots,\zeta_n;w|\eta)\,,
% \end{align*}
% where the coordinates
% \[
% (w \, | \, \eta) = \left( \tfrac{1}{z} \, \left| \, \tfrac{\zeta}{z} \right. \right)
% \]
% are used in the open chart at $\infty\in\PP^1$,
% and 
over $W_j$ the coordinates
% we use
\begin{align*}
(z_{4},\ldots,z_{n}|\zeta_{3},\ldots,\zeta_{n};w_j|\eta_j)\,,
\end{align*}
where
% which on the nonempty intersection sets satisfy the relations
\begin{equation}
\label{change}
\begin{split}
    (w_3|\eta_3)=&(z-z_3^0 - \zeta \zeta_3| \zeta-\zeta_3), \\
    (w_j|\eta_j)=&(z-z_j-\zeta \zeta_j| \zeta-\zeta_j), \quad j\geq 4.
\end{split}
\end{equation}

Since our open subsets and their intersections (pairwise, as triple are all empty) are Stein manifolds,
they are acyclic for cohomology with coefficients in coherent sheaves of $\cO$-modules on the corresponding open subsupermanifolds. Therefore this atlas can be used to compute \v{C}ech cohomology. We thus compute
% the \v{C}ech cocycle for
the \sKS map from \eqref{change} in this atlas as
\begin{equation*}
\begin{split}
 \del_{z_j} \stackrel{\KS}{\longmapsto}
 % \Big\{ \Theta_{0k}^{Bos,(j)}\Big\}_{k=3}^n=
 &\Big\{ -\delta_{kj} \partial_{w_j} \in \Gamma(W_0 \cap W_k,\cT^{sc}_{\cCn/\cMn}(-\NS)) \Big\}_{k=3}^n, \qquad j=4,\ldots,n, \\
\del_{\zeta_a} \stackrel{\KS}{\longmapsto}    
% \Big\{ \Theta_{0k}^{Fer,(a)}\Big\}_{k=3}^n=
&\Big\{ - \delta_{ka} (\partial_{\eta_a}-\zeta \partial_{w_a}) = \delta_{ka} ((2\eta_a+\zeta_a)\partial_{w_a}-D_{\eta_a}) \in \Gamma(W_0 \cap W_k,\cT^{sc}_{\cCn/\cMn}(-\NS)) \Big\}_{k=3}^n,
\\
& \qquad  a=3,\dots, n.
%    & =\Big\{ \delta_{ka} ((2\eta_a+\zeta_a)\partial_{w_a}-D_{\eta_a}) \in \Gamma(W_0 \cap W_k,\cT^{sc}_{\cCn/\cMn}(-\NS)) \Big\}_{k=3}^n, \qquad a=3,\ldots,n.
    \end{split}
\end{equation*}
% where we have computed the even and odd parts separately.
Composing with the isomorphism \eqref{sc}, we get
%\label{KS}
\begin{equation}
\label{KS}
\begin{split}
 \del_{z_j} \stackrel{\KS}{\longmapsto} 
 % \Big\{ \Theta_{0k}^{Bos,(j)}\Big\}_{k=3}^n=
 &\Big\{ -\delta_{kj} D_{\eta_j} \otimes D_{\eta_j} \in \Gamma(W_0 \cap W_k,\cD^{\otimes 2} (-\NS)) \Big\}_{k=3}^n, \qquad j=4,\ldots,n, \\
\del_{\zeta_a} \stackrel{\KS}{\longmapsto}    
% \Big\{ \Theta_{0k}^{Fer,(a)}\Big\}_{k=3}^n=
&\Big\{ \delta_{ka} ((2\eta_a+\zeta_a) D_{\eta_a} \otimes D_{\eta_a} ) \in \Gamma(W_0 \cap W_k,\cD^{\otimes 2} (-\NS)) \Big\}_{k=3}^n, \qquad  a=3,\dots, n.
%    & =\Big\{ \delta_{ka} ((2\eta_a+\zeta_a)\partial_{w_a}-D_{\eta_a}) \in \Gamma(W_0 \cap W_k,\cT^{sc}_{\cCn/\cMn}(-\NS)) \Big\}_{k=3}^n, \qquad a=3,\ldots,n.
    \end{split}
\end{equation}

%%%%%%%%%%%%%%%%%%%%%%%%%%%%
\subsection{The dual map
%The \sKS map $\cT_{\cMn} \rightarrow H^0 (\omega^3 (\NS))^*$
}
While we have an explicit basis for $H^0(\omega^3(\NS))$, we want to compute the \sKS map in coordinates $(z_4,\ldots,z_n|\zeta_3,\ldots,\zeta_n)$ on $\cMn$.
%This requires making explicit
Equivalently, we can make explicit the dual map isomorphism
\[
\KS^* \co H^0(\omega^3(\NS)) \to H^0(\cMn, \cT_{\cMn}^*)\]
obtained by combining the \sKS map with Serre duality and applying linear duality. % To determine the
% explicit
%\sKS map, we write down t
The Serre duality isomorphism
$$
\sigma: H^1(X,\cD^{\otimes 2}(-\NS))
% = H^1(X, \cT^{sc}_{\cCn/\cMn}(-\NS))
\longrightarrow H^0(\omega^{\otimes 3}(\NS))^*
$$
is given
% dually
by
\begin{align*}
    \sigma(f D_\zeta \otimes D_\zeta)(\lambda)=\Res\left[\iota_{fD_\zeta \otimes D_\zeta} \lambda\right].
\end{align*}
The contraction can be easily computed once we notice that
\begin{equation}\label{eq:contract}
    \iota_{D_\zeta}[dz|d\zeta]= - 1.
\end{equation}
A nice way to prove \eqref{eq:contract} is by using the exact sequence 
\[
0 \to \mc{D} \to \cT \to \cT/\mc{D} \to 0
\]
and the isomorphism $\mc{D}^2 = \mc{T}/\mc{D}$.  Indeed, note that
$\left[\tfrac{\del}{\del z} \, | \, \tfrac{\del}{\del\zeta}\right]$ is
a local basis of $\Ber \mc{T}$, and compute
\[
\left[\tfrac{\del}{\del z} \, | \, \tfrac{\del}{\del \zeta}\right] = \left[\tfrac{\del}{\del z} \, | \, D_\zeta\right] = [D_\zeta^2 \, | \, D_\zeta] = D_\zeta \in \Ber \mc{T} = \mc{D}.
\]
Therefore, $[dz|d\zeta] = D_\zeta^* \in \mc{D}^* = \Ber \mc{T}^* = \omega_X$, which means $[dz|d\zeta] ( D_\zeta) = 1$ and implies \eqref{eq:contract}.
% Alternatively, to compute this map one can use the isomorphism of $\Ber\Omega^1_X$ with the sheaf $Z^1_X$ of closed 1-forms on~$X$ given by
% $$
% f(z|\zeta)[dz|d\zeta] \longmapsto d\zeta f(z|\zeta)+(dz-\zeta d\zeta) D_\zeta f(z|\zeta),
% $$
% which maps $[dz|d\zeta]$ to $d\theta$.

So, the composition of Serre duality with the \sKS map \eqref{KS} gives
\begin{equation}
\label{KSdual}
\begin{split}
 \sigma \circ \KS \co   \partial_{z_j} &\longmapsto \sigma(- D_\zeta \otimes D_\zeta |_{W_0 \cap W_j}), \quad j=4,\ldots, n,\\
 \sigma \circ \KS \co   \partial_{\zeta_a} &\longmapsto \sigma((2\eta_a+\zeta_a) D_\zeta \otimes D_\zeta |_{W_0 \cap W_a}), \quad a=3,\ldots, n,
\end{split}
\end{equation}
where we take into account that $D_{\eta_j} = D_\zeta|_{W_0 \cap W_j}$, which follows from \eqref{change}.
We want to evaluate $\sigma$ explicitly on the basis \eqref{basisA} for $H^0(\omega^3(\NS))$, described in \Cref{prop}. For convenience, we rewrite that basis in the form
\begin{equation}
\label{basis}
 \left\{\zeta z^a  \Bigg\vert\, z^b \left(1-\sum_{k=1}^n \frac {\zeta_k \zeta}{z-z_k} \right) \right\} \cdot \frac{\vol^3}{\prod_{j=1}^n (z-z_j)},
\end{equation}
where $b=0,\ldots,n-3$ and $a=0,\ldots,n-4$. In this formula, and
until the end of this section, to simplify notation, we will simply
write $z_1,z_2,z_3$ instead of $z_1^0,z_2^0,z_3^0$ and
$\zeta_1,\zeta_2$ instead of $\zeta_1^0,\zeta_2^0$. We stress that
these are fixed points in $\CC$, not free bosonic coordinates on
$\cMn$ like $z_4,\dots,z_n$ (and we still do not allow them to
take infinite value, except in \Cref{infty}). However, as they appear
symmetrically in the formulas, we will avoid using the superscript $0$
here and in the Appendices.

% {\bf Sergio: The matrix elements of $\sigma$ as given by $S^A_{\ J}\equiv\epsilon^A (\sigma(e_J))=\sigma(e_J)(b^A)$, where $e_J$ is the chosen basis of $H^1(X,\cD^{\otimes 2}(-\NS))$, $b^A$ is the basis of $H^0(\omega^{\otimes 3}(\NS))$ and $\epsilon^A$ is its dual. With our choices, we get}
We compute
\begin{equation*}
    \sigma\left(- D_\zeta \otimes D_\zeta |_{W_0\cap W_i} \right)\left(\frac{z^a \zeta [dz|d\zeta]^3}{\prod_{j=1}^n (z-z_j)}\right) =
    \Res|_{z=z_i}\left[ \frac{-z^a \zeta [dz|d\zeta]}{\prod_{j=1}^n (z-z_j)} \right]=\frac {-z_i^a}{\prod_{k\neq i} (z_i-z_k)},
\end{equation*}
    % \\[0.4cm]
\begin{align*}
% \color{red}
    \sigma\left(- D_\zeta \otimes D_\zeta |_{W_0\cap W_i} \right)\left(\frac {z^b [dz|d\zeta]^3}{\prod_{j=1}^n (z-z_j)} \left(1-\sum_{k=1}^n \frac {\zeta_k \zeta}{z-z_k} \right)\right)=& \Res|_{z=z_i}\left[\frac {z^b [dz|d\zeta]}{\prod_{j=1}^n (z-z_j)} \left( \sum_{k=1}^n \frac {\zeta_k \zeta}{z-z_k} \right) \right]\\
    = \frac {z_i^b}{\prod_{j\neq i} (z_i-z_j)} \sum_{k\neq i} \frac {\zeta_k}{z_i-z_k}& + \zeta_i \frac {d}{dz}_{z=z_i} \frac {z^b}{\prod_{j\neq i} (z-z_j)}\\
    =\frac {z_i^b}{\prod_{j\neq i} (z_i-z_j)} \sum_{k\neq i} \frac {\zeta_k-\zeta_i}{z_i-z_k}& + \frac {b \zeta_i z_i^{b-1}}{\prod_{j\neq i} (z_i-z_j)}
\end{align*}
for $i=4,\ldots,n$ and, similarly,
\begin{align*}
    \sigma\left((2\eta_c+\zeta_c) D_\zeta \otimes D_\zeta |_{W_0\cap W_c} \right)\left(\frac {z^a \zeta [dz|d\zeta]^3}{\prod_{j=1}^n (z-z_j)}\right) & = 
    \frac {- z_c^a \zeta_c}{\prod_{k\neq c} (z_c-z_k)},\\[0.4cm]
    \sigma\left((2\eta_c+\zeta_c) D_\zeta \otimes D_\zeta |_{W_0\cap W_c} \right)\left(\frac {z^b [dz|d\zeta]^3}{\prod_{j=1}^n (z-z_j)} \left(1-\sum_{k=1}^n \frac {\zeta_k \zeta}{z-z_k} \right)\right) & = \frac {2 z_c^{b}}{\prod_{j\neq c} (z_c-z_j)}\cr
    & \quad +\frac {z_c^b}{\prod_{j\neq c} (z_c-z_j)} \sum_{k\neq c} \frac {\zeta_c \zeta_k}{z_c-z_k}
\end{align*}
for $c=3,\ldots,n$.

\subsection{The Berezinian of the dual
% \sKS 
map
%$M_n \co H^0(\omega^3(\NS)) \linebreak[1] \rightarrow \linebreak[0] \cT^*_{\cMn}$
}
\begin{prop}
\label{prop2}
The isomorphism $\Ber (\KS^*)$ maps $\Ber \left(\{\zeta,\zeta z,\dots,
\zeta z^{n-4}\,\vert\, 1,\dots, z^{n-3}\} \cdot \frac
      {[dz|d\zeta]^3}{\prod_{j=1}^n (z-z_j-\zeta \zeta_j)} \right)$ to
      $- \frac{1}{2^{n-2}} (z_3-z_1 - \tfrac 12 \zeta_3\zeta_1)
      (z_3-z_2 - \tfrac 12 \zeta_3\zeta_2 ) [ dz_4, \dots, dz_n \mid
        d\zeta_3, \dots, d\zeta_n ]$.
\end{prop}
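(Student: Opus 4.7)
The plan is to compute $\Ber(\KS^*)$ as the Berezinian of an explicit super matrix in the specified bases, and then simplify. The map $\KS^*\colon H^0(\omega^3(\NS)) \to \cT_{\cMn}^*$ is between free super-modules of rank $(n-3\mid n-2)$, with source basis ordered as evens $\{\zeta z^a\}_{a=0}^{n-4}$ followed by odds $\{z^b\}_{b=0}^{n-3}$, each scaled by the common factor $\tfrac{[dz|d\zeta]^3}{\prod_j(z-z_j-\zeta\zeta_j)}$, and target basis $(\{dz_j\}_{j=4}^n \mid \{d\zeta_c\}_{c=3}^n)$. I would first assemble its matrix $M = \bigl(\begin{smallmatrix} A & B \\ C & D \end{smallmatrix}\bigr)$ directly from the four evaluations of $\sigma \circ \KS$ computed at the end of the previous subsection: $A_{j,a} = -z_j^a/\prod_{k\neq j}(z_j-z_k)$ is purely bosonic, $D_{c,b}$ equals $2z_c^b/\prod_{k\neq c}(z_c-z_k)$ plus a nilpotent $\zeta$-quadratic term, and the off-diagonal blocks $B$, $C$ are linear in the $\zeta_i$, so that $BD^{-1}C$ is at least quadratic in the $\zeta_i$ and one may safely replace $D^{-1}$ by $D_0^{-1}$ inside that product.

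I would then treat the bosonic skeleton and the odd corrections separately. Setting $\zeta_i = 0$ makes $B$ and $C$ vanish, so $\Ber M|_{\zeta=0} = \det A / \det D_0$ reduces to a ratio of two Vandermonde-type determinants. Factoring out the row prefactors and applying the identity $\prod_{4\le j\ne k\le n}(z_j-z_k) = (-1)^{\binom{n-3}{2}}V(z_4,\dots,z_n)^2$ together with its analog over $\{3,\dots,n\}$ and the relation $V(z_3,\dots,z_n)/V(z_4,\dots,z_n) = \prod_{j=4}^n(z_j-z_3)$, nearly all factors cancel, leaving $\pm\tfrac{1}{2^{n-2}}(z_3-z_1)(z_3-z_2)$; the factor $2^{n-2}$ originates from the leading $2$'s in each row of $D_0$.

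Restoring the odd variables, I would write $\Ber M = \det(A - BD_0^{-1}C)\det(D)^{-1}$ and expand both factors to the relevant order in the $\zeta_i$. The surviving quadratic corrections must collapse precisely into the two factors $-\tfrac12\zeta_3\zeta_1$ and $-\tfrac12\zeta_3\zeta_2$ inside the parentheses of the target expression. This cancellation rests on the standard partial-fraction identities $\sum_{i=1}^n z_i^m/\prod_{k\neq i}(z_i-z_k) = 0$ for $0 \le m \le n-2$, which eliminate all $\zeta_j\zeta_k$-type cross terms with $j,k \ge 4$ and, more delicately, also all $\zeta_3\zeta_a$ with $a \ge 4$. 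The asymmetry responsible for the survival of only $\zeta_3\zeta_1$ and $\zeta_3\zeta_2$ stems from the fact that $z_1, z_2$ and the gauge-fixed constants $\zeta_1, \zeta_2$ do not label coordinates on $\cMn$, so they appear only through the $\zeta$-shifted super-denominators $(z-z_j-\zeta\zeta_j)$ in the source basis, and are picked up by the residue at $z_3$ coming from the \v{C}ech class over $W_0\cap W_3$.

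The main obstacle is the bookkeeping in this last step. The answer's clean factorization as $-\tfrac{1}{2^{n-2}}(z_3-z_1-\tfrac12\zeta_3\zeta_1)(z_3-z_2-\tfrac12\zeta_3\zeta_2)$ is invisible at the level of individual matrix entries; it emerges only after assembling the Schur-complement expansion of $\Ber M$ and applying the residue identities above to collapse the resulting partial-fraction sums, while simultaneously tracking the sign coming from the Vandermonde combinatorics. I would defer the detailed bookkeeping to Appendix~\ref{app:BerMn}, which the excerpt promises to contain.
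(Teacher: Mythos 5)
Your setup agrees with the paper's: you assemble the matrix of $\KS^*$ in the bases \eqref{basisA} and $\{dz_4,\dots,dz_n \mid d\zeta_3,\dots,d\zeta_n\}$ from the four residue evaluations of $\sigma\circ\KS$, take a Schur-complement formula for the Berezinian, and the bosonic skeleton $\det A/\det D_0$ is indeed a ratio of Vandermonde-type determinants producing $\pm\tfrac{1}{2^{n-2}}(z_3-z_1)(z_3-z_2)$, with $2^{n-2}$ coming from the $2$'s in $D_0$. The gap is that the actual content of the proposition --- the exact treatment of the nilpotent corrections --- is never carried out. You assert that the quadratic terms ``must collapse'' into $-\tfrac12\zeta_3\zeta_1$ and $-\tfrac12\zeta_3\zeta_2$ and that this ``rests on'' the identities $\sum_i z_i^m/\prod_{k\neq i}(z_i-z_k)=0$, and then defer the bookkeeping to \Cref{app:BerMn} --- but that appendix \emph{is} the proof you are being asked to produce, so the deferral is circular. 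Two concrete problems: (i) the statement is an exact identity in the odd variables, whereas replacing $D^{-1}$ by $D_0^{-1}$ and expanding ``to the relevant order'' discards contributions of order $\ge 4$ in $\zeta_3,\dots,\zeta_n$ (there are $n-2$ odd coordinates, so such terms are not automatically zero for $n\ge 6$); you would have to prove they cancel, which you do not. (ii) The elimination of the $\zeta_3\zeta_a$ terms with $a\ge 4$ is precisely the delicate step, and invoking partial-fraction identities is not an argument: in the paper this is a genuine determinant-vanishing lemma proved by a Vandermonde cofactor computation combined with a degree count and the specialization $z_4=0$. Your sketch also leaves the overall sign as $\pm$, while the statement requires a definite sign.

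For comparison, the paper's route localizes all odd corrections by using the other Schur form, $\Ber M_n=\det(A)\det(D-CA^{-1}B)^{-1}$, exploiting that $A$ is purely even. In $N=D-CA^{-1}B$ the rows labelled by $\zeta_c$ with $c\ge 4$ become exactly bosonic: since $(1,z_c,\dots,z_c^{n-4})$ is proportional to a row of $A$, the nilpotent parts of $D_{ab}$ and of $(CA^{-1}B)_{ab}$ cancel identically, so only the $\zeta_3$-row carries corrections. Splitting that single row gives $\det N=\det N'+\det N''$, where the $\zeta_3\zeta_1,\zeta_3\zeta_2$ part factors out of $\det N'$ and produces the shifted factors $(z_3-z_a-\tfrac12\zeta_3\zeta_a)$ exactly, while $\det N''$ (carrying all $\zeta_3\zeta_k$ with $k\ge4$ together with the $A^{-1}B$ contribution) is shown to vanish identically. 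To repair your proposal you would either need to reproduce an exact argument of this kind within your chosen Schur complement $\det(A-BD^{-1}C)\det(D)^{-1}$ --- where the corrections sit in both factors and the cancellation is less transparent --- or switch to the form above; in either case the all-orders cancellation and the sign must be established, not assumed.
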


\begin{proof}
We compute the dual \sKS map $\KS^*$ explicitly to be given by the following matrix in the dual bases \eqref{basisA} of $H^0(\omega^3(\NS))$ and $\{ dz_4, \dots, dz_n \mid d\zeta_3, \dots, d\zeta_n \}$ of $H^0(\cMn, \cT_{\cMn}^*)$:
\begin{align*}
     M_n=\begin{pmatrix}
       A & B \\ C & D
     \end{pmatrix}\,,
\end{align*}
where $A$ is an $(n-3)\times (n-3)$ matrix with entries
\begin{align*}
    A_{jk}=\frac {-z_{j+3}^{k-1}}{\prod_{l\neq j+3} (z_{j+3}-z_l)}
\end{align*}
with $1 \le l \le   n$ and $1 \le j,k \le n-3$; $B$ is an $(n-3)\times (n-2)$ matrix with entries
\begin{align*}
    B_{jb}=\frac {z_{j+3}^{b-1}}{\prod_{l\neq j+3} (z_{j+3}-z_l)} \left(-\sum_{i\neq j+3} \frac {\zeta_{j+3}-\zeta_i}{z_{j+3}-z_i}+(b-1)\frac {\zeta_{j+3}}{z_{j+3}}\right)
\end{align*}
with $1 \le l \le   n$ and $1\leq j \leq n-3$, $1\leq b \leq n-2$; $C$ is an $(n-2)\times (n-3)$ matrix with entries
\begin{align*}
    C_{ak}= \frac {\zeta_{a+2} z_{a+2}^{k-1}}{\prod_{l\neq a+2} (z_{a+2}-z_l)}
\end{align*}
with $1 \le l \le   n$ and $1\leq a \leq n-2$, $1\leq k \leq n-3$; and $D$ is an $(n-2)\times (n-2)$ matrix with entries
\begin{align*}
    D_{ab}=-\frac {z_{a+2}^{b-1}}{\prod_{l\neq a+2} (z_{a+2}-z_l)} \left( 2+\sum_{i\neq a+2} \frac {\zeta_{a+2}\zeta_i}{z_{a+2}-z_i} \right)
\end{align*}
with $1 \le l \le   n$ and $1\leq a \leq n-2$, $1\leq b \leq n-2$.
% {\bf Sergio: We used that $M_n=S^t$ and that
% \begin{align*}
%     S=\begin{pmatrix}
%        E & F \\ G & H
%      \end{pmatrix} \qquad \Longrightarrow \qquad S^t=\begin{pmatrix}
%        E^t & G^t \\ -F^t & H^t
%      \end{pmatrix}\,.
% \end{align*}
% }
In the end, it is the Berezinian of $M_n$ what we need, and we determine it by a direct computation.
% in \Cref{app:BerMn}.
\begin{lem}\label{lem:BerMn}
  The Berezinian of the map $ M_n$ is given by 
\begin{equation*}
    \Ber M_n=\frac{\det(A)}{\det(D-CA^{-1}B)}= - \frac{1}{2^{n-2}} (z_3-z_1 - \tfrac 12 \zeta_3\zeta_1 ) (z_3-z_2 - \tfrac 12 \zeta_3\zeta_2 )
    % \left(1 -\frac 12 \frac{\zeta_3\zeta_1}{z_3-z_1} - \frac 12 \frac{\zeta_3\zeta_2}{z_3-z_2} \right)
    .
    \label{Ber}
\end{equation*}
%% If $z_1=\infty$, the Berezinian of the corresponding map $M_n^\infty$
%% is given by
%% \begin{equation}
%%     \Ber M_n^\infty =\frac{\det(A)}{\det(D-CA^{-1}B)}= \frac{1}{2^{n-2}} ( 1+ \tfrac 12 \zeta_3 \eta_1) (z_3-z_2 - \tfrac 12 \zeta_3\zeta_2 ).
%%     \label{Ber3}
%% \end{equation}
\end{lem}

\Cref{app:BerMn} is dedicated to the proof of this lemma. The proposition follows.
\end{proof}

%%%%%%%%%%%%%%%%%%%%%%%%%%%%%%%%%%%%%%%%%%%%%
\section{Vertex-operator insertions}
\label{VOI}

Here we discuss vertex-operator insertions and argue that they define a section of the line bundle $B(\omega|_{\NS})$. Our construction in principle follows the lines suggested by Witten \cite[Appendix A]{Witten.2015.NotesOnLowGenus} for the moduli space $M_{g,1}$ of ordinary Riemann surfaces with one puncture, but the case of Neveu-Schwarz punctures on a super Riemann surfaces seems to be considerably subtler. A vertex operator is usually understood as representing the asymptotic state of an incoming (at time~$t=-\infty$), or outgoing (at time~$t=+\infty$) particle, with given spacetime momentum, spin, and other possible quantum numbers. After integration over spacetime, the vacuum expectation value (VEV) of the vertex operator becomes a section of the contangent bundle of the worldsheet (a super Riemann surface) of the propagating superstring, which presents itself as a boundary component of the SRS.
% {\bf Sam: I am ignorant and confused - so we now try to deal with SRS with boundary? Does this cause any further trouble in the super world beyond what one has bosonically? Is the boundary now of real dimension $(1|2)$, and how does one think of the two odd coordinates? Sorry, I am just ignorant here... Sasha: I have responded in an email. See also my comments below.}
The assumed superconformal invariance of the vertex operator means that the vertex operator is a cotangent vector to the super Riemann surface $X$ at the puncture replacing
% {\bf Sam: what does this replacing mean? We take a limit as the length shrinks to zero? Does it then make sense how we take the length to zero, could there be some twisting along the boundary, etc? Sasha: The boundary in this context is more like a puncture $(z_0 | \zeta_0)$ with the choice of local holomorphic coordinates $(z|\zeta)$ so that the boundary component is given by $\abs{z- z_0 - \zeta \zeta_0} = 1$. You use a real variable $t$ in this equation and set it to go to zero. So, actually the boundary component tends to the NS divisor $z- z_0 - \zeta \zeta_0 = 0$. The twisting around the boundary is not supposed to matter because of the NS constraints on the vertex operator...}
the boundary component. This puncture can be of Ramond or Neveu-Schwarz type, but in this paper, we focus on NS punctures.
% $p$ on the SRS $X$.
Let us also consider only \emph{conformal}, i.e., \emph{left-moving}, vertex operators, as a start, to keep doing complex analytic geometry as long as possible. The holomorphic cotangent bundle $\cT^*_{X}$ on $X$ is a rank-$(1|1)$ super vector bundle, which fits into the short exact sequence
\[
0 \to (\cD^*)^{\otimes 2} \to \cT^*_{X} \to \cD^* \to 0,
\]
dual to \eqref{tangentsequence}. Recall that $\cD^* = \omega_X$ and restrict this sequence to the NS puncture $p \in X$:
\[
0 \to \omega_{X}^{\otimes 2}\bigm|_p \to \cT^*_{X}|_p \to \omega_{X}|_p \to 0.
\]
If two cotangent vectors $v_1,v_2 \in \cT^*_{X}|_p$ form a basis such that $v_1$ generates $\omega_{X}^{\otimes 2}|_p$ and $v_2$ projects to a generator of $\omega_{X}|_p$, then their Berezinian $[v_1 | v_2]$ is a vector in $\Ber \cT^*_{X}|_p = \omega_{X}|_p$. This vector is a \emph{vertex-operator insertion}, which is meant to have been integrated over the field configurations (to produce a VEV) and is usually denoted with extra brackets: $\langle [v_1 | v_2] \rangle$.

We claim the vertex-operator insertion $[v_1 | v_2] \in \omega_{X}|_p$ gives rise canonically to a vector in the odd line $\Ber H^0(\NS_p, \omega_{X}|_{\NS_p})$.
Let $\NS_p$ be the NS divisor corresponding to the NS puncture $p$. The NS divisor is determined by its algebra of functions $\cO_X|_{\NS_p} = S(\cD^*|_p) = \cO_X|_p \oplus \cD^*|_p$, see \Cref{sec:foundation}.
% Then we have a short exact sequence of sheaves on the underlying singleton topological space $\{p\}$ of $\NS_p$:
% \[
% 0 \to \cD^*|_p \to \cO_X|_{\NS_p} \to \cO_{X}|_p \to 0,
% \]
% because the NS divisor at $\NS_p$ is defined as the odd curve tangent to $\cD|_p$ at $p$.
Tensoring this algebra with $\omega_X|_{\NS_p}$ over $\cO_X|_{\NS_p}= \cO_{\NS_p}$, we get
% the short exact sequence
\[
\omega_{X}|_{\NS_p} = \omega_{X}|_p \oplus  \omega_{X}^{\otimes 2}|_p,
\]
whence we see that $\Ber H^0(\NS_p, \omega_{X}|_{\NS_p}) = \omega_{X}^{\otimes 2}|_p \otimes \omega^{-1}_{X}|_p = \omega_{X}|_p$. Thus, the vertex-operator insertion $[v_1 | v_2] \in \omega_{X}|_p$ can naturally be regarded as a vector in $B (\omega_{X}|_{\NS_p}) = \Ber H^0(\NS_p, \omega_{X}|_{\NS_p})$. In the above, we have made sure to avoid splitting super vector spaces and sheaves into even and odd components, which allows the argument to easily generalize to families of super Riemann surfaces. For the universal family $\cC_{g,1,0}\to\cM_{g,1,0}$, a vertex-operator insertion becomes a section $V(p) = [v_1 | v_2] \in H^0(\cM_{g,1,0}, B (\omega|_{\NS_p}))$, where $p$ is understood as a section $p: \cM_{g,1,0} \to \cC_{g,1,0}$ of the universal family.

When we have $n$ NS punctures $p_1, \dots p_n$, the corresponding total NS divisor is the sum $\NS = \NS_{p_1} + \dots \linebreak[0] + \NS_{p_n}$,
and the vertex-operator insertions $V_1(p_1), \dots, V_n(p_n)$ multiply to produce a section $\bigotimes_{j=1}^n V_j(p_j)$ of $\Ber(\omega|_{\NS}) = \Ber(\omega|_{\NS_{p_1}}) \otimes \dots \otimes \Ber(\omega|_{\NS_{p_1}})$. If $(z | \zeta)$ are holomorphic coordinates near the $j$'th NS puncture $p_j = (z_j | \zeta_j)$, where $\Big[ \, \zeta [dz | d\zeta]\vert_{\NS_j} \Big| [dz | d\zeta]\vert_{\NS_j}  \Big]$ gives a basis of $\Ber(\omega|_{\NS_j})$, then $V_j(p_j)$ may be expressed as $V_j(z_j | \zeta_j) \Big[ \, \zeta [dz | d\zeta]\vert_{\NS_j} \Big| [dz | d\zeta]\\vert_{\NS_j} \Big]$, whence the product of these gives a section 
\[
\bigotimes_{j=1}^n V_j(z_j | \zeta_j) \Big[ \, \zeta [dz | d\zeta]\vert_{\NS_j} \mid [dz | d\zeta]\vert_{\NS_j}  \Big] \in H^0(\cMn, \Ber (\omega|_{\NS})), 
\]
as claimed in \eqref{VO}, where we used the brackets $\langle \; \rangle$ there to stress that it includes integration over the spacetime configurations.

\section{Tree-level scattering amplitudes in the physics literature}
% tree-level superstring scattering amplitudes in the NS sector}
\label{sec:amplitudes}

To better understand the physics context of our results and match them with the well-known computations, we review how the holomorphic measure $d\nu$
arises
% is used to
in the determination the superstring scattering amplitudes in the physics literature for the type IIB superstring theory. The matter superfields in terms of the holomorphic coordinates $(z|\zeta)$ on a SRS and the antiholomorphic coordinates $(\bar z|\bar \zeta)$ on the complex conjugate SRS 
are given by
\begin{align}
    {\pmb X}^\mu (z|\zeta;\bar z| \bar \zeta)=X_L^\mu(z)+ X_R^\mu(\bar z)+\zeta \psi^\mu(z) +\bar \zeta \bar \psi^\mu(\bar z).\label{superfield}
\end{align}
In the usual approach to superstring scattering amplitudes one also introduces the superghost fields
\begin{align*}
    B(z|\zeta)=& b(z)+\zeta \beta(z), \qquad C(z|\zeta)= c(z)+\zeta \gamma(z),\\
    \bar B(\bar z|\bar \zeta)=& \bar b(\bar z)+\bar \zeta \bar \beta(\bar z), \qquad \bar C(\bar z|\bar \zeta)= \bar c(\bar z)+\bar \zeta \bar \gamma(\bar z).
\end{align*}
In order to compute the scattering amplitude among $n$ physical states, one inserts and then integrates superconformal vertex operators of conformal weight $(1|1)$ at the punctures.
% , which will be integrated.
However, to eliminate the zero modes of the superghosts and get sensible path integrals, one must then also introduce picture changing operators (PCO), necessary to get a sensible nontrivial measure (allowing for nonvanishing result). It turns out that the vertex operators can then be taken in the form
\[
    \cV(z|\zeta;\bar z| \bar \zeta)=c(z)\bar c(\bar z) \delta(\gamma(z)) \delta(\bar \gamma(\bar z)) V(z|\zeta;\bar z| \bar \zeta)\,,
\]
where $V$ is a superconformal field of conformal weight $(1|1)$, and the delta distributions are meant to ensure that the $\gamma$ components vanish at the puncture. When computing the amplitudes, for $n\geq3$, one has also to introduce the insertion of $\delta(\beta) \delta(\bar \beta)$, conventionally chosen at 
$(z_3|\zeta_3,\bar z_3|\bar\zeta_3)$, which can be used to eliminate $\delta(\gamma(z_3)) \delta(\bar \gamma(\bar z_3))$. For $n\geq 4$, $n-3$ vertices can be replaced by integrated vertices
\[
    K_{V_j}=\int_{\Sigma} d\zeta_j d\bar\zeta_j dz_j d\bar z_j\ V_j(z_j|\zeta_j;\bar z_j| \bar \zeta_j)\,,
\]
where $\Sigma$ is the genus zero super Riemann surface underlying the scattering representation, so that we can write for the amplitude
\[
    \cA_n=\left\langle c\bar c \delta(\gamma) V_1(z_1^0|0;\bar z_1^0|0) c\bar c \delta(\gamma) V_2(z_2^0|0;\bar z_2^0|0) 
    c\bar c \delta(\gamma) D_{\zeta_3} D_{\bar \zeta_3} V_3(z_3^0|0;\bar z_3^0|0) \prod_{j=4}^n K_{V_j}\right\rangle\,,
\]
where $\langle\dots \rangle$ means the VEV. In particular, the first three bosonic coordinates are fixed at the arbitrary values $z^0_1, z^0_2, z^0_3$, 
while the first two odd variables are conventionally fixed at $\zeta_a=0$, $a=1,2$.
Notice that the ghosts disappeared from the integrated vertices so they factor in the VEV as
\[
    k_{gh}=\left\langle c\bar c \delta(\gamma) (z_1^0;\bar z_1^0) c\bar c \delta(\gamma) (z_2^0;\bar z_2^0)
    c\bar c \delta(\gamma) (z_3^0;\bar z_3^0) \right\rangle\,.
\]
A simple physics calculation using bosonization and operator product expansion (OPE) then yields
\begin{align*}
   k_{gh}=(z_3^0-z_1^0)(z_3^0-z_2^0)(\bar z_3^0-\bar z_1^0)(\bar z_3^0-\bar z_2^0)\,,
\end{align*}
and we remark that the amplitude can then be rewritten as
\begin{equation}
    \cA_n=g_c^{n-2}\int_{\cMn} \left\langle V_1(z_1^0|0;\bar z_1^0|0) V_2(z_2^0|0;\bar z_2^0|0) V_3(z_3^0|\zeta_3;\bar z_3^0|\bar \zeta_3) 
     \prod_{j=4}^n V_j(z_j|\zeta_j;\bar z_j|\bar \zeta_j) \right\rangle Dm\,, \label{amplitude}
\end{equation}
where $g_c$ denotes the closed string coupling and the integration measure is
\[
    Dm=k_{gh} [dz_4\dots dz_n|d\zeta_3\dots d\zeta_n] [d\bar z_4\dots d\bar z_n|d\bar\zeta_3\dots d\bar\zeta_n] \, .
\]

Direct comparison with \eqref{eq:main} allows us to identify $g_c=1/4$ (when neglecting the dilaton field).
%
%\begin{rem}
%In the standard language, the integrated vertex operators are obtained after using the identity
%\begin{align*}
%    V(z|\zeta;\bar z|\bar \zeta)=\frac 1{(2\pi i)^2} \oint dw \oint d\bar w b(w) \bar b (\bar w) (c(z)\bar c(\bar z) V(z|\zeta;\bar z|\bar \zeta)),
%\end{align*}
%where the integral are over positive oriented circles centered in $z$ and $\bar z$. It is interesting to notice that in our language this can be written in a 
%manifestly supersymmetric way as follows. Introduce the super vertex operator of conformal weights $(0|0)$
%\begin{align*}
%    W(z|\zeta;\bar z|\bar \zeta)= C(z|\zeta)\bar C(\bar z|\bar \zeta) V(z|\zeta;\bar z|\bar \zeta)).
%\end{align*}
%Then, we have
%\begin{align*}
%    V(z|\zeta;\bar z|\bar \zeta)=\Res_{{\scriptsize \begin{array}{c}
%         w|\xi=z|\zeta \\ \bar w|\bar \xi=\bar z|\bar \zeta       
%    \end{array}}} \Big(B(w|\xi)\bar B(\bar w|\bar \xi)\cdot W(z|\zeta;\bar z|\bar \zeta)\Big),
%\end{align*}
%where the dot means OPE and the residue is meant in the supergeometric sense. 
%\end{rem}
%
\section{Conclusion}

We can now compare our results with the standard derivation of tree-level NS amplitudes available in the literature. Instead of deducing the amplitude formula \eqref{amplitude} by constructing a measure through the introduction of ghosts and PCOs, we have suggested the slightly more general formula \eqref{eq:main}, containing only the physical fields and the canonical measure, as determined by the super Mumford isomorphism over $\cMn$. By comparing the two formulas, we see that they coincide after replacing $g_c$ by $\tfrac {\langle e^{\phi}\rangle}{4}$, if we include the dilaton contribution in \eqref{eq:main}.

\appendix

\section{The super Mumford form over $\cMn$
% Berezinian of the isomorphism~$M$
}
\label{app:mapM}
% We recall that
Here we compute the image of the super Mumford form 
\begin{equation}
\label{superMum1}
\mu = \left[\frac{\zeta\vol^3}{z}, \frac{\zeta\vol^3}{z^2}, \frac{\zeta\vol^3}{z^3}  \; \left| \; \frac{\vol^3}{z}, \frac{\vol^3}{z^2} \right. \right] \in B(\omega^3),
\end{equation}
see \eqref{superMum}, tensored with a global section
\[
\left\langle
     \bigotimes_{j=1}^n V_j(z_j | \zeta_j) \Big[ \, \zeta [dz | d\zeta]\\vert_{\NS_j} \Big| [dz | d\zeta]\vert_{\NS_j} \Big]   \right\rangle \in B(\omega|_{\NS})
     \]
arising from vertex-operator insertions, under the isomorphism
\begin{equation}
    \label{phi}
\varphi \co B(\omega^3)  \otimes B(\omega|_{\NS})
% \xrightarrow{\varphi}
\to B(\omega^3(\NS)),
\end{equation}
which comes from the short exact sequence
\begin{equation}
\label{eq:preM1}
 0\to\omega^3\stackrel{f}{\longrightarrow}\omega^3(\NS)\stackrel{g}{\longrightarrow}\omega^3(\NS)|_{\NS}\to 0\,
\end{equation}
and the canonical isomorphism \eqref{core}:
\begin{equation}
\label{core1}
B(\omega|_{\NS}) \xrightarrow{\sim} B(\omega^3(\NS)|_{\NS}) \, .
\end{equation}
Even though all the isomorphisms are defined in a canonical, coordinate-independent way, since we want to perform explicit computations, we work with the standard affine coordinates 
$(z | \zeta)$ on $\PP^{1|1}$.
Note that the isomorphisms are linear over holomorphic functions, such as  $\left\langle \prod_{j=1}^n V_j(z_j | \zeta_j) \right\rangle$, on the supermoduli space and, therefore, it is enough to compute the image of
\begin{equation}
\label{mu-tensored}
\mu \otimes \bigotimes_{j=1}^n \Big[ \, \zeta [dz | d\zeta]\vert_{\NS_j} \mid [dz | d\zeta]\vert_{\NS_j} \Big]
\end{equation}
under $\varphi$.

\begin{prop}
\label{prop}
The image of \eqref{mu-tensored} under the isomorphism $\varphi$ of \eqref{phi} is the Berezinian of the following basis of $H^0(\omega^3(\NS))$:
\begin{align}
\label{basisA}
 \{\zeta,\zeta z,\dots, \zeta z^{n-4}\,\vert\, 1,\dots, z^{n-3}\}  \cdot \frac {[dz|d\zeta]^3}{\prod_{j=1}^n (z-z_j-\zeta \zeta_j)}\,.
\end{align}
\end{prop}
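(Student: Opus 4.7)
The plan is to trace the element $\mu \otimes V$ with $V := \bigotimes_{j=1}^n \bigl[\zeta [dz|d\zeta]|_{\NS_j} \bigm| [dz|d\zeta]|_{\NS_j}\bigr]$ through the two canonical isomorphisms composing $\varphi$: the isomorphism \eqref{core1}, which identifies $B(\omega|_{\NS})$ with $B(\omega^3(\NS)|_{\NS})$, and the Berezinian-of-cohomology isomorphism of the short exact sequence \eqref{eq:preM1}. The preparatory step is to verify that \eqref{basisA} is indeed a basis of $H^0(\omega^3(\NS))$: writing any global section as $g(z|\zeta)\cdot h$ with $h := [dz|d\zeta]^3 / \prod_{j=1}^n(z - z_j - \zeta\zeta_j)$ and $g = a(z) + b(z)\zeta$, the regularity condition at infinity in the $(w|\eta)=(1/z|\zeta/z)$-chart forces $\deg a \le n-3$ and $\deg b \le n-4$, producing exactly the $(n-3 \mid n-2)$-dimensional basis of the correct parities, in agreement with $\dim\cMn$.

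To apply the first isomorphism, take the natural even generating section $s := [dz|d\zeta]^2 / \prod_k(z - z_k - \zeta\zeta_k)$ of $\omega^2(\NS)$; multiplication by $s$ sends $[dz|d\zeta]|_{\NS_j}$ to $h|_{\NS_j}$, so each vertex-operator factor becomes $\bigl[\zeta h|_{\NS_j} \bigm| h|_{\NS_j}\bigr]$, and $V$ transforms into an element $V' \in B(\omega^3(\NS)|_{\NS})$ equal to the Berezinian of the \emph{diagonal} basis $\{\zeta h|_{\NS_j},\, h|_{\NS_j}\}_{j=1}^n$ of $H^0(\omega^3(\NS)|_{\NS}) = \bigoplus_j H^0(\omega^3(\NS)|_{\NS_j})$ (independence of the choice of $s$ is the canonicity asserted in \eqref{core1}). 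For the second isomorphism, the long exact sequence \eqref{eq:M} yields $\Ber H^0(\omega^3(\NS)|_{\NS}) = \Ber H^0(\omega^3(\NS)) \otimes \Ber H^1(\omega^3)$ canonically; extending the restriction $\mathbf{S}|_{\NS}$ of the basis $\mathbf{S}$ in \eqref{basisA} by lifts $\tilde{\mathbf{B}}_1$ of the \v{C}ech cocycle basis $\mathbf{B}_1 = \{\zeta \vol^3/z^i \mid \vol^3/z^j\}$ of $H^1(\omega^3)$ used in \eqref{superMum1}, one gets $\Ber(\mathbf{S}|_{\NS} \sqcup \tilde{\mathbf{B}}_1) \mapsto \Ber(\mathbf{S}) \otimes \Ber(\mathbf{B}_1)$. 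Under $B(\omega^3) \cong \Ber H^1(\omega^3)^{-1}$ (with the parity shifts built into the definitions of Section 3), $\mu$ corresponds to $\Ber(\mathbf{B}_1)^{-1}$, whence $\varphi(\mu \otimes V') = \bigl(V' / \Ber(\mathbf{S}|_{\NS} \sqcup \tilde{\mathbf{B}}_1)\bigr) \cdot \Ber \mathbf{S}$.

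The main obstacle is to show that the ratio $V'/\Ber(\mathbf{S}|_{\NS} \sqcup \tilde{\mathbf{B}}_1)$ equals $1$, equivalently that the Berezinian of the change of basis between the assembled and diagonal bases of $H^0(\omega^3(\NS)|_{\NS})$ is $1$. Concretely, each $z^b h|_{\NS_j}$ restricts to $z_j^b/\prod_{k \ne j}(z_j - z_k)$ times a basis vector of $\omega^3(\NS)|_{\NS_j}$, with fermionic corrections arising from the expansion of $\prod_{k \ne j}(z_j - z_k + \zeta(\zeta_j - \zeta_k))^{-1}$. The bosonic submatrix is essentially a Vandermonde matrix whose determinant cancels these denominators, and a careful application of the superdeterminant formula $\Ber M = \det(A - BD^{-1}C)/\det D$ shows that the fermionic corrections do not spoil the cancellation. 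An alternative, more conceptual route invokes the canonicity of both $V'$ and $\Ber(\mathbf{S}|_{\NS} \sqcup \tilde{\mathbf{B}}_1)$: they are canonical generators of the one-dimensional $B(\omega^3(\NS)|_{\NS})$, so their ratio is a scalar independent of the punctures, which can be pinned down by a single normalization check at a symmetric configuration. The principal technical difficulty throughout is the consistent bookkeeping of parities and the $\Pi$-shifts in the definitions of Berezinians of complexes.
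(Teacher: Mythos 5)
Your reduction follows the same route as the paper: trade the vertex-operator factors for the diagonal basis of $H^0(\omega^3(\NS)|_{\NS})$ via a generating section of $\omega^2(\NS)$ (your global choice $s=[dz|d\zeta]^2/\prod_k(z-z_k-\zeta\zeta_k)$ differs from the paper's local per-puncture generators only by units along $\NS$, which the canonicity of \eqref{core} absorbs), then compare, inside $B(\omega^3(\NS)|_{\NS})$, the diagonal Berezinian with the Berezinian of the basis assembled from the restriction of \eqref{basisA} and a lift of the $H^1(\omega^3)$ basis underlying $\mu$. Your preparatory check that \eqref{basisA} is a basis of $H^0(\omega^3(\NS))$ is correct.

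The gap is at the decisive step: you assert, but do not establish, that the change-of-basis Berezinian between the assembled and diagonal bases equals $1$. This is exactly the content of the paper's concluding lemma in \Cref{app:mapM}, and it is not a routine cancellation: one must choose an explicit splitting $\delta'$ of \eqref{eq:M1} (supported on the columns indexed by $z_1,z_2,z_3$, resp.\ $z_1,z_2$), observe that the resulting $2n\times 2n$ matrix is block upper triangular so that $\Ber M=\det\tilde A/\det\tilde C$, and then evaluate two determinants that are Vandermonde matrices bordered by the $\delta'$-columns; the value $1$ emerges only after a sign $(-1)^{n-3}$ from column permutations cancels against the ratio of Vandermonde factors. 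Saying that ``the fermionic corrections do not spoil the cancellation'' names the outcome rather than proving it. Your proposed shortcut does not repair this: the ratio $V'/\Ber(\mathbf{S}|_{\NS}\sqcup\tilde{\mathbf{B}}_1)$ is a nowhere-vanishing holomorphic function on $\cMn$, since the basis \eqref{basisA} depends explicitly on the punctures, and no symmetry or rigidity argument is given that forces this function to be constant (the moduli space is noncompact, so there is no Liouville-type argument, and ``both are canonical generators'' is not true of the assembled basis in any puncture-independent sense). A priori the ratio could have been, say, a cross-ratio-type expression in the $z_j,\zeta_j$, which is precisely what the explicit determinant computation rules out; so a single normalization check at one configuration cannot substitute for it.
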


\begin{proof}
For each $j=1, \dots, n$, multiplication by the generator $\frac {[dz|d\zeta]^2}{ (z-z_j-\zeta \zeta_j)}$ of $\omega^2 (NS)$ (in a small enough open neighborhood of the $j$th NS puncture) maps the basis
\begin{equation*}
% \label{NS|NS}
\left\{ \zeta \vol \biggm|_{\NS_j} \;
\Biggm| \; \vol \biggm|_{\NS_j}
\right\}
\end{equation*}
of $H^0(\omega\vert_{\NS_j})$ to the basis
\begin{equation} %\tag{$29_j$}
\label{NS|NS0}
\left\{ \frac {\zeta \vol^3 }{z-z_j-\zeta \zeta_j} \biggm|_{\NS_j} \;
\Biggm| \; \frac {\vol^3}{z-z_j-\zeta \zeta_j} \biggm|_{\NS_j}
\right\}
\end{equation}
of $H^0(\omega^3(\NS)\vert_{\NS_j})$,
where the ``restriction'' $\vert_{\NS_j}$ means considering a local section of $\omega^3(\NS)$ modulo the subsheaf $(z-z_j-\zeta \zeta_j)\, \omega^3(\NS)$. This maps the Berezinian 
\begin{equation}
\label{|NS}
\bigotimes_{j=1}^n \Big[ \, \zeta [dz | d\zeta]\vert_{\NS_j} \Big| [dz | d\zeta]\vert_{\NS_j} \Big]
\end{equation}
of the basis of $H^0(\omega_{\NS})$ to the Berezinian
\begin{equation}
\label{NS|NS}
\bigotimes_{j=1}^n \left[ \frac {\zeta \vol^3 }{z-z_j-\zeta \zeta_j} \biggm|_{\NS_j} \;
\Biggm| \; \frac {\vol^3}{z-z_j-\zeta \zeta_j} \biggm|_{\NS_j}
\right]
\end{equation}
of the basis of $H^0(\omega^3(\NS)_{\NS})$
and induces the canonical isomorphism \eqref{core1} on the Berezinians of the cohomology.
% This global basis trivializes the line bundle $B(\omega^3(\NS)\vert_{\NS}) = \Ber (H^0(\omega^3(\NS)\vert_{\NS}))$, and hence the canonical isomorphism
% \begin{equation}
% \label{eq:preomega3NS}
% B (\omega^3)  \otimes  B (\omega^3(\NS)|_{\NS}) \xrightarrow{\sim}B(\omega^3(\NS))
% \end{equation}
% induced by \eqref{eq:preM1} simplifies to
% \begin{equation}
% \label{eq:omega3NS}
% B (\omega^3) \xrightarrow{\sim} B(\omega^3(\NS)).
% \end{equation}
It remains to show that the canonical isomorphism
\begin{equation}
 \label{eq:preomega3NS}
 B (\omega^3)  \otimes  B (\omega^3(\NS)|_{\NS}) \xrightarrow{\sim}B(\omega^3(\NS))
\end{equation}
induced by \eqref{eq:preM1} matches the super Mumford form $\mu$, the Berezinian \eqref{NS|NS}, and the Berezinian of the basis \eqref{basisA}:
\begin{equation}
\label{matching}
\mu \otimes \eqref{NS|NS} \mapsto [\eqref{basisA}] .
\end{equation}

% We claim that their Berezinians are related in a very simple way, proving the following
% \begin{lem}
% %\label{lem:M}
% The canonical isomorphism
% \[
% \Ber (H^0(\omega^3(\NS))) \otimes \Ber (H^1(\omega^3)) \xrightarrow{\sim} \Ber (H^0(\omega^3(\NS)|_{\NS}))
% \]
% induced by \eqref{eq:M1}, matches the Berezinians of the bases \eqref{basisA}, \eqref{H1omega3}, and \eqref{NS|NS}:
% \[
% [\eqref{basisA}] \otimes [\eqref{H1omega3}] \mapsto [\eqref{NS|NS}].
% \]
% \end{lem}

We will work with the long exact sequence in cohomology
\begin{equation}
\label{eq:M1}
    0\rightarrow H^0(\omega^3(\NS)) \stackrel {g}{\longrightarrow} H^0(\omega^3(\NS)\vert_{\NS}) \stackrel {\delta}{\longrightarrow} H^1(\omega^3)\rightarrow 0\,
\end{equation}
associated with the short exact sequence \eqref{eq:preM1} of sheaves. In terms of the sequence \eqref{eq:M1}, the isomorphism \eqref{eq:preomega3NS} rewrites as
\[
\Ber (H^0(\omega^3(\NS)|_{\NS})) \xrightarrow{\sim}  \Ber (H^1(\omega^3)) \otimes   \Ber (H^0(\omega^3(\NS)))   .
\]
We will describe the relation between the bases of the three terms of \eqref{eq:M1} and the resulting relation between their Berezinians.

% Similarly to the constructions of \Cref{sec:bases}, a basis for $H^0(\omega^3(\NS))$ may be given as
% \begin{align}
% %\label{basisA}
% \frac {[dz|d\zeta]^3}{\prod_{j=1}^n (z-z_j-\zeta \zeta_j)}\cdot \{\zeta,\zeta z,\dots, \zeta z^{n-4}\,\vert\, 1,\dots, z^{n-3}\}\,.
% \end{align}
The map~$g$ in \eqref{eq:M1} is obtained by restricting the elements of the basis \eqref{basisA} to the divisor $\NS$. This gives the following elements of  $H^0(\omega^3(\NS)|_{\NS})$:
\begin{gather*}
 % \frac {\zeta z^a [dz|d\zeta]^3}{\prod_{j=1}^n (z-z_j-\zeta \zeta_j)}&\longmapsto
    \bigoplus_{j} \frac {\zeta z_j^a}{\prod_{k\neq j} (z_j-z_k)}\,  \frac {\vol^3}{z-z_j-\zeta \zeta_j} \biggm|_{\NS_j} , \qquad 0 \le a \le n-4,\\
    % \frac {z^a [dz|d\zeta]^3}{\prod_{j=1}^n (z-z_j-\zeta \zeta_j)}&\longmapsto
%     \bigoplus_{j} \left[ \frac {z_j^a}{\prod_{k\neq j}(z_j-z_k)}+\frac {\zeta z_j^a}{\prod_{k\neq j}(z_j-z_k)} \left(\sum_{k\neq j} \frac {\zeta_k}{z_j-z_k} +a\frac {\zeta_j}{z_j} \right) \right]\frac {\vol^3}{z-z_j-\zeta \zeta_j} \biggm|_{\NS_j}, \quad 0 \le a \le n-3.\\
%  \color{blue}
%  \frac {z^a}{\prod_{j=1}^n(z-z_j-\zeta\zeta_j)}[dz|d\zeta]^3        \longmapsto           \oplus_j \frac {(z_j-\zeta_j \zeta)^a}{\prod_{k\neq j}(z_j -z_k+\zeta_k\zeta)} \frac {[dz|d\zeta]^3}{z-z_j+\zeta_j\zeta}\\
%  \color{blue}
%  = \oplus_j \frac {z_j^a -a z_j^{a-1}\zeta_j \zeta}{\prod_{k\neq j}(z_j -z_k+\zeta_k\zeta)} \frac {[dz|d\zeta]^3}{z-z_j+\zeta_j\zeta}
% \\
%  \color{blue}
%  = \oplus_j \frac {z_j^a -a z_j^{a-1}\zeta_j \zeta}{\prod_{k\neq j}(z_j- z_k)} \Big(1+\sum_{k\neq j} \frac {\zeta\zeta_k}{z_j-z_k}  \Big) \frac {[dz|d\zeta]^3}{z-z_j+\zeta_j\zeta}
% \\
 \bigoplus_{j} \left[ \frac {z_j^a}{\prod_{k\neq j}(z_j-z_k)} - \frac {\zeta z_j^a}{\prod_{k\neq j}(z_j-z_k)} \left( \sum_{k\neq j} \frac {\zeta_j - \zeta_k}{z_j-z_k} - a\frac {\zeta_j}{z_j} \right) \right] \frac {\vol^3}{z-z_j-\zeta \zeta_j} \biggm|_{\NS_j}, \quad 0 \le a \le n-3.
\end{gather*}
To determine the map~$\delta$ in \eqref{eq:M1} explicitly, we first note that
\begin{equation}
\label{H1omega3}
 \left\{\zeta z^2 , \zeta z, \zeta \biggm| z^2, z  \right\} \cdot \frac{\vol^3}{z^3}
\end{equation}
is a basis of $H^1(\omega^3)$ whose Berezinian is the super Mumford form $\mu$, see \eqref{superMum1}.
On the other hand, by Serre duality $H^1(\omega^3) = H^0(\omega^{-2})^*$, and
\[
\left\{ 1, z, z^2\, \vert\, -\zeta, -\zeta z \right\} \cdot \vol^{-2} 
\]
% \[
% \color{red}
% \left\{ 1, z, z^2\, \vert\, -\zeta, -\zeta z \right\} \cdot \vol^{-2} 
% \]
is the dual basis of $H^0(\omega^{-2})$.
% Finally, a basis for $H^0(\omega^3(\NS)|_{\NS})$ is given by
The map $\delta$
% in \eqref{eq:M}
sends $\lambda \in H^0(\omega^3(\NS)|_{\NS}) $ to $\delta(\lambda) \in H^1(\omega^3) = H^0(\omega^{-2})^* $ such that for any $\alpha \in H^0(\omega^{-2})$, we have
\begin{align*}
    \delta(\lambda)(\alpha)=\Res_{\NS} (\lambda \alpha).
\end{align*}
This map $\delta$ is given explicitly on the basis \eqref{NS|NS0} by
\begin{align*}
    \frac {\zeta  \vol^3 }{z-z_j-\zeta \zeta_j}\biggm|_{\NS_j}&\longmapsto \left( \frac \zeta{z}+z_j \frac \zeta{z^2}+z_j^2 \frac \zeta{z^2} \right) [dz|d\zeta]^3 , \\
    \frac { \vol^3}{z-z_j-\zeta \zeta_j}\biggm|_{\NS_j} &\longmapsto  \left( \frac 1{z}+z_j \frac 1{z^2} \right) [dz|d\zeta]^3
\end{align*}
for $j=1,\dots,n$.

Thus, in matrix form with respect to the chosen bases, we can write
\begin{align*}
    g=\begin{pmatrix}
        A & B \\ 0 & C
    \end{pmatrix},
\end{align*}
with
\begin{align*}
    A=&\begin{pmatrix}
      \frac 1{\prod_{k\neq 1} (z_1-z_k)} (1,z_1,\ldots,z_1^{n-4}) \\
      \frac 1{\prod_{k\neq 2} (z_2-z_k)} (1,z_2,\ldots,z_2^{n-4}) \\
      \ldots\ldots\ldots \\
      \frac 1{\prod_{k\neq n} (z_n-z_k)} (1,z_n,\ldots,z_n^{n-4})
    \end{pmatrix}, \\
    B=&\begin{pmatrix}
      \frac 1{\prod_{k\neq 1} (z_1-z_k)} \left(\sum_{k\neq 1} \frac {\zeta_1 - \zeta_k}{z_1-z_k}\ ;\ 
       \sum_{k\neq 1} \frac {(\zeta_1 - \zeta_k) z_1}{z_1-z_k} - \zeta_1\ ;\ \ldots\ ;\ \sum_{k\neq 1} \frac {(\zeta_1 -\zeta_k) z_1^{n-3}}{z_1-z_k} - (n-3)z_1^{n-4} \zeta_1\right) \\
      \frac 1{\prod_{k\neq 2} (z_2-z_k)} \left(\sum_{k\neq 2} \frac{\zeta_2 - \zeta_k}{z_2-z_k}\ ;\ 
       \sum_{k\neq 2} \frac {(\zeta_2 - \zeta_k) z_2}{z_2-z_k}- \zeta_2\ ;\ \ldots\ ;\ \sum_{k\neq 2} \frac {(\zeta_2- \zeta_k) z_2^{n-3}}{z_2-z_k} - (n-3)z_2^{n-4} \zeta_2\right) \\
      \ldots\ldots\ldots \\
      \frac 1{\prod_{k\neq n} (z_n-z_k)} \left(\sum_{k\neq n} \frac {\zeta_n - \zeta_k}{z_n-z_k}\ ;\ 
      \sum_{k\neq n} \frac {( \zeta_n - \zeta_k) z_n}{z_n-z_k} -  \zeta_n\ ;\ \ldots\ ;\ \sum_{k\neq n} \frac {(  \zeta_n - \zeta_k) z_n^{n-3}}{z_n-z_k} - (n-3)z_n^{n-4} \zeta_n \right)
    \end{pmatrix}, \\
    C=&\begin{pmatrix}
      \frac 1{\prod_{k\neq 1} (z_1-z_k)} (1,z_1,\ldots,z_1^{n-3}) \\
      \frac 1{\prod_{k\neq 2} (z_2-z_k)} (1,z_2,\ldots,z_2^{n-3}) \\
      \ldots\ldots\ldots \\
      \frac 1{\prod_{k\neq n} (z_n-z_k)} (1,z_n,\ldots,z_n^{n-3})
    \end{pmatrix},
\end{align*}
and 
\begin{align*}
    \delta=\begin{pmatrix}
        1 & 1 & \ldots & 1 & 0 & 0 & \ldots & 0 \\
        z_1 & z_2 & \ldots & z_n & 0 & 0 & \ldots & 0 \\
        z_1^2 & z_2^2 & \ldots & z_n^2 & 0 & 0 & \ldots & 0 \\
        0 & 0 & \ldots & 0 & 1 & 1 & \ldots & 1 \\
        0 & 0 & \ldots & 0 & z_1 & z_2 & \ldots & z_n
    \end{pmatrix}.
\end{align*}

% This can be done
% canonically
% as follows.
Let us pick a splitting of the short exact sequence \eqref{eq:M1}, that is to say, a map $\delta': H^1(\omega^3) \to H^0(\omega^3(\NS)\vert_{\NS})$ such that $\delta \delta'= \id$. A priori there are many choices, but the relation between the Berezinians will not depend on the choice. Since we have fixed the coordinates $\{z_4,\ldots z_n\,|\,\zeta_3,\ldots,\zeta_n\}$,
we are left to work with the matrix elements associated to the remaining ones, which renders
% leave us with
an invertible $5\times 5$ matrix. We get
\begin{align*}
    \delta'=\begin{pmatrix}
        \frac {z_2 z_3}{(z_2-z_1)(z_3-z_1)} & \frac {-(z_2+ z_3)}{(z_2-z_1)(z_3-z_1)} & \frac {1}{(z_2-z_1)(z_3-z_1)} & 0 & 0 \\
        \frac {z_1 z_3}{(z_1-z_2)(z_3-z_2)} & \frac {-(z_1+ z_3)}{(z_1-z_2)(z_3-z_2)} & \frac {1}{(z_1-z_2)(z_3-z_2)} & 0 & 0 \\
        \frac {z_2 z_1}{(z_2-z_3)(z_1-z_3)} & \frac {-(z_2+ z_1)}{(z_2-z_3)(z_1-z_3)} & \frac {1}{(z_2-z_3)(z_1-z_3)} & 0 & 0 \\
        0 & 0 & 0 & 0 & 0 \\
        \vdots & \vdots & \vdots & \vdots & \vdots \\
        0 & 0 & 0 & 0 & 0 \\
        0 & 0 & 0 & \frac {z_2}{z_2-z_1} &\frac{-1}{z_2-z_1} \\
        0 & 0 & 0 & \frac {z_1}{z_1-z_2} &\frac{-1}{z_1-z_2} \\
        0 & 0 & 0 & 0 & 0 \\
        \vdots & \vdots & \vdots & \vdots & \vdots \\
        0 & 0 & 0 & 0 & 0 
    \end{pmatrix}.
\end{align*}
The transition matrix $M$ from the basis of $H^0(\omega^3(\NS)\vert_{\NS})$ formed by the image of the basis \eqref{basisA} under $g$ and the image of the basis \eqref{H1omega3} under $\delta'$ to the basis \eqref{NS|NS0} over all $j = 1, \dots, n$ is thus represented by a $2n\times 2n$ block matrix
\begin{align*}
    M=\begin{pmatrix}
        \tilde A & \tilde B \\ 0 & \tilde C
    \end{pmatrix},
\end{align*}
where
\begin{align*}
    \tilde A=&\begin{pmatrix}
      \frac 1{\prod_{k\neq 1} (z_1-z_k)} (1,z_1,\ldots,z_1^{n-4}), & \frac {(z_2z_3, -z_2-z_3, 1)}{(z_2-z_1)(z_3-z_1)} \\
      \frac 1{\prod_{k\neq 2} (z_2-z_k)} (1,z_2,\ldots,z_2^{n-4}), & \frac {(z_1z_3, -z_1-z_3, 1)}{(z_1-z_2)(z_3-z_2)} \\
      \frac 1{\prod_{k\neq 3} (z_3-z_k)} (1,z_3,\ldots,z_3^{n-4}), & \frac {(z_1z_2, -z_1-z_2, 1)}{(z_1-z_3)(z_2-z_3)} \\
      \frac 1{\prod_{k\neq 4} (z_4-z_k)} (1,z_4,\ldots,z_4^{n-4}), & (0, 0, 0) \\
      \vdots & \vdots \\
      \frac 1{\prod_{k\neq n} (z_n-z_k)} (1,z_n,\ldots,z_n^{n-4}), & (0, 0, 0)
    \end{pmatrix}, \\
    \tilde B=&\begin{pmatrix}
      \frac 1{\prod_{k\neq 1} (z_1-z_k)} \left(\sum_{k\neq 1} \frac {\zeta_1 - \zeta_k}{z_1-z_k};\ 
       \sum_{k\neq 1} \frac {(\zeta_1 - \zeta_k) z_1}{z_1-z_k} - \zeta_1;\ \ldots ;\ \sum_{k\neq 1} \frac {(\zeta_1 -\zeta_k) z_1^{n-3}}{z_1-z_k} - (n-3)z_1^{n-4} \zeta_1;0 ; 0 \right) \\
      \frac 1{\prod_{k\neq 2} (z_2-z_k)} \left(\sum_{k\neq 2} \frac{\zeta_2 - \zeta_k}{z_2-z_k};\ 
       \sum_{k\neq 2} \frac {(\zeta_2 - \zeta_k) z_2}{z_2-z_k}- \zeta_2;\ \ldots;\ \sum_{k\neq 2} \frac {(\zeta_2- \zeta_k) z_2^{n-3}}{z_2-z_k} - (n-3)z_2^{n-4} \zeta_2;0; 0 \right) \\
      \ldots\ldots\ldots \\
      \frac 1{\prod_{k\neq n} (z_n-z_k)} \left(\sum_{k\neq n} \frac {\zeta_n - \zeta_k}{z_n-z_k};\ 
      \sum_{k\neq n} \frac {( \zeta_n - \zeta_k) z_n}{z_n-z_k} -  \zeta_n;\ \ldots;\ \sum_{k\neq n} \frac {(  \zeta_n - \zeta_k) z_n^{n-3}}{z_n-z_k} - (n-3)z_n^{n-4} \zeta_n;0;0 \right)
    \end{pmatrix}, \\
    \tilde C=&\begin{pmatrix}
      \frac 1{\prod_{k\neq 1} (z_1-z_k)} (1,z_1,\ldots,z_1^{n-3}), & \frac {(z_2, -1)}{z_2-z_1} \\
      \frac 1{\prod_{k\neq 2} (z_2-z_k)} (1,z_2,\ldots,z_2^{n-3}), & \frac {(z_1, -1)}{z_1-z_2} \\
      \frac 1{\prod_{k\neq 3} (z_3-z_k)} (1,z_3,\ldots,z_3^{n-3}), & (0, 0, 0) \\
      \vdots & \vdots \\
      \frac 1{\prod_{k\neq n} (z_n-z_k)} (1,z_n,\ldots,z_n^{n-3}), & (0, 0, 0)
    \end{pmatrix}.
\end{align*}
Accordingly, the Berezinians of the two bases  of $H^0(\omega^3(\NS)|_{\NS})$ are related by $\Ber M$, which the following lemma claims to be simply equal to 1, proving \eqref{matching} and thereby the proposition.
\end{proof}

\begin{lem}
%\label{lem:M}
 % Denote $M: H^0(\omega^3(\NS))\oplus H^1(\omega^3) \rightarrow H^0(\omega^3(\NS)|_{\NS})$ the isomorphism induced by \eqref{eq:M}. Then
 $\Ber M=1$.
\end{lem}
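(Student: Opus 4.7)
The plan is to exploit the block structure of $M$ to reduce $\Ber M$ to a ratio of ordinary determinants, compute each factor via Vandermonde identities, and verify the cancellation. Because the lower-left block of $M$ vanishes, the Berezinian formula simplifies to $\Ber M = \det(\tilde A)/\det(\tilde C)$, turning the problem into two $n\times n$ determinant calculations.

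Inside $\tilde A$, the last three columns (coming from the splitting $\delta'$ of the three even generators of $H^1(\omega^3)$) are supported only on rows $1,2,3$, while the first $n-3$ columns form a Vandermonde-type pattern over all $n$ rows. Permuting rows $1,2,3$ past rows $4,\dots,n$ introduces a sign $(-1)^{3(n-3)}=(-1)^{n-3}$ and displays $\tilde A$ as block lower triangular, giving $\det(\tilde A) = (-1)^{n-3}\det(A_{12})\det(A_{21})$, where $A_{12}$ is the $3\times 3$ splitting block and $A_{21}$ is the $(n-3)\times(n-3)$ Vandermonde-type block. The analogous decomposition of $\tilde C$ has trivial sign $(-1)^{2(n-2)}=1$ and yields $\det(\tilde C) = \det(C_{12})\det(C_{21})$ with $C_{12}$ of size $2\times 2$ and $C_{21}$ of size $(n-2)\times(n-2)$.

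I then compute each of the four determinants in closed form. Expansion along the last column of $A_{12}$ recognizes its three $2\times 2$ minors as a classical Vandermonde identity, yielding $\det(A_{12}) = -1/[(z_1-z_2)(z_1-z_3)(z_2-z_3)]$, while a direct $2\times 2$ computation gives $\det(C_{12})=1/(z_2-z_1)$. For the ``inverse-Vandermonde'' blocks $A_{21}$ and $C_{21}$, factoring $\prod_{l\neq j}(z_j-z_l)$ out of each row reduces the remaining matrix to an ordinary Vandermonde; splitting the row-denominator into its ``fixed'' factors ($l\in\{1,2,3\}$ for $A_{21}$, $l\in\{1,2\}$ for $C_{21}$) and its ``free'' factors, and using the identity $\prod_{j\neq l}(z_j-z_l) = (-1)^{\binom{m}{2}}\Delta^2$ for the free set of size $m$, yields
\[
\det(A_{21}) = \frac{(-1)^{\binom{n-3}{2}}}{V\prod_{j=4}^n\prod_{l=1}^3(z_j-z_l)}, \qquad \det(C_{21}) = \frac{(-1)^{\binom{n-2}{2}}}{\Delta_{n-2}\prod_{j=3}^n\prod_{l=1}^2(z_j-z_l)},
\]
where $V=\prod_{4\le i<j\le n}(z_j-z_i)$ and $\Delta_{n-2}=\prod_{3\le i<j\le n}(z_j-z_i)$.

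To assemble the ratio, I substitute $\Delta_{n-2}=V\prod_{j=4}^n(z_j-z_3)$ and split the row-denominator products by whether the row index is $3$ or lies in $\{4,\dots,n\}$; after this separation, all contributions over $j\ge 4$ cancel pairwise, leaving just $(z_3-z_1)(z_3-z_2)$ as the polynomial part of $\det(A_{21})/\det(C_{21})$, together with the sign $(-1)^{\binom{n-3}{2}-\binom{n-2}{2}} = (-1)^{-(n-3)}=(-1)^{n-3}$. Meanwhile $\det(A_{12})/\det(C_{12}) = 1/[(z_1-z_3)(z_2-z_3)]$, so the combined polynomial part is $(z_3-z_1)(z_3-z_2)/[(z_1-z_3)(z_2-z_3)] = 1$ (using $(z_3-z_a)/(z_a-z_3)=-1$ for $a=1,2$), whence $[\det(A_{12})\det(A_{21})]/[\det(C_{12})\det(C_{21})]=(-1)^{n-3}$. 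Multiplying by the block-permutation sign $(-1)^{n-3}$ gives $\Ber M = (-1)^{2(n-3)}=1$. The main obstacle is precisely this sign bookkeeping: three independent sources of $\pm 1$---the row permutation, the internal signs in the small matrices $A_{12}$ and $C_{12}$, and the binomial exponents coming from $\prod_{j\ne l}(z_j-z_l)$---must combine to $+1$ uniformly in $n$, while the underlying polynomial manipulation is standard Vandermonde algebra.
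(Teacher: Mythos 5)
Your proposal is correct and follows essentially the same route as the paper: reducing $\Ber M$ to $\det\tilde A/\det\tilde C$ via the vanishing lower-left block, splitting each determinant into the small inverse-Vandermonde block ($3\times3$, resp.\ $2\times2$) times the $(n-3)\times(n-3)$, resp.\ $(n-2)\times(n-2)$, Vandermonde-type block with the permutation sign $(-1)^{n-3}$, and cancelling. All your intermediate values ($\det A_{12}$, $\det C_{12}$, the closed forms for $\det A_{21}$, $\det C_{21}$, and the final sign bookkeeping) agree with the paper's computation, which merely organizes the cancellation of the products $\prod_{k\neq l}(z_l-z_k)$ directly instead of via the binomial-exponent signs.
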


\begin{proof}
We have $\Ber M=\det \tilde A \det \tilde C^{-1}$. In order to compute these determinants, let us notice that
\begin{align*}
\tilde A=\begin{pmatrix}
        A_1 & A_2 \\ A_3 & 0
    \end{pmatrix},    
\end{align*}
where $A_2$ is the $3\times 3$ matrix
\begin{align*}
    A_2=\begin{pmatrix}
        1 & 1 & 1\\
        z_1 & z_2 & z_3\\
        z_1^2 & z_2^2 & z_3^2
    \end{pmatrix}^{-1},
\end{align*}
and $A_3$ is the $(n-3)\times (n-3)$ matrix
\begin{align*}
    A_3=\begin{pmatrix}
    \frac 1{\prod_{k\neq 4} (z_4-z_k)} (1,z_4,\ldots,z_4^{n-4}) \\
      \vdots \\
      \frac 1{\prod_{k\neq n} (z_n-z_k)} (1,z_n,\ldots,z_n^{n-4})
    \end{pmatrix}.
\end{align*}
Therefore,
\begin{align*}
    \det A_2=&\prod_{1\leq j<k \leq 3}\frac 1{z_k-z_j},\\
    \det A_3=&\prod_{l=4}^n \left( \frac 1{\prod_{k\neq l} (z_l-z_k)} \right)\det \begin{pmatrix}
    1,z_4,\ldots,z_4^{n-4} \\
      \vdots \\
     1,z_n,\ldots,z_n^{n-4}
    \end{pmatrix}=\prod_{l=4}^n \left( \frac 1{\prod_{k\neq l} (z_l-z_k)} \right) \prod_{4\leq r< m\leq n}(z_m-z_r),
\end{align*}
and\footnote{In bringing each of the far right columns to the left, in order to get a block-diagonal matrix, the determinant changes sign $3(n-3)$ times.}
\begin{align*}
    \det \tilde A=(-1)^{n-3} \det A_2 \det A_3=(-1)^{n-3}\prod_{1\leq j<k \leq 3}\frac 1{z_k-z_j} \prod_{l=4}^n \left( \frac 1{\prod_{k\neq l} (z_l-z_k)} \right) \prod_{4\leq r< m\leq n}(z_m-z_r).
\end{align*}
In the same way, we have
\begin{align*}
\tilde C=\begin{pmatrix}
        C_1 & C_2 \\ C_3 & 0
    \end{pmatrix},    
\end{align*}
where $C_2$ is the $2\times 2$ matrix
\begin{align*}
    C_2=\begin{pmatrix}
        1 & 1 \\
        z_1 & z_2
    \end{pmatrix}^{-1},
\end{align*}
and $C_3$ is the $(n-2)\times (n-2)$ matrix
\begin{align*}
    C_3=\begin{pmatrix}
    \frac 1{\prod_{k\neq 3} (z_3-z_k)} (1,z_3,\ldots,z_3^{n-3}) \\
      \vdots \\
      \frac 1{\prod_{k\neq n} (z_n-z_k)} (1,z_n,\ldots,z_n^{n-3})
    \end{pmatrix}.
\end{align*}
Therefore,
\begin{align*}
    \det C_2=& \frac 1{z_2-z_1},\\
    \det C_3=&\prod_{l=3}^n \left( \frac 1{\prod_{k\neq l} (z_l-z_k)} \right)\det \begin{pmatrix}
    1,z_3,\ldots,z_3^{n-3} \\
      \vdots \\
     1,z_n,\ldots,z_n^{n-3}
    \end{pmatrix}=\prod_{l=3}^n \left( \frac 1{\prod_{k\neq l} (z_l-z_k)} \right) \prod_{3\leq r< m\leq n}(z_m-z_r), 
\end{align*}
and 
\begin{align*}
    \det \tilde C= \det C_2 \det C_3=\frac 1{z_2-z_1} \prod_{l=3}^n \left( \frac 1{\prod_{k\neq l} (z_l-z_k)} \right) \prod_{3\leq r< m\leq n}(z_m-z_r).
\end{align*}
By taking the quotient we get
\begin{align*}
    \Ber M=& \frac {\det \tilde A}{\det \tilde C}=\frac {(-1)^{n-3}}{(z_3-z_2)(z_3-z_1)(z_2-z_1)} \prod_{l=4}^n \left( \frac 1{\prod_{k\neq l} (z_l-z_k)} \right) \prod_{4\leq r< m\leq n}(z_m-z_r)\cr
    &\times (z_2-z_1) \prod_{l=3}^n \left(\prod_{k\neq l} (z_l-z_k) \right) \prod_{3\leq r< m\leq n}\frac 1{z_m-z_r}\cr
    =& \frac {(-1)^{n-3}}{(z_3-z_2)(z_3-z_1)} \left(\prod_{k\neq 3} (z_3-z_k)\right) \prod_{3< m\leq n}\frac 1{z_m-z_3}.
\end{align*}
The first factor times the last $n-3$ give $\prod_{k\neq 3} \frac 1{z_3-z_k}$, so we get $\Ber M=1$.
\end{proof}

%%%%%%%%%%%%%%%%%%%%%%%%%%%%%%%
\section{Proof of \Cref{lem:BerMn}: the Berezinian of the \sKS map}
\label{app:BerMn}

We first notice that, for $N=D-CA^{-1}B$, in terms of rows, we can
write
\begin{align*}
    N=\begin{pmatrix}
        -\frac 2{\prod_{j\neq 3}(z_3-z_j)} \left( \left(1+\frac 12 \sum_{k\neq 3} \frac {\zeta_3\zeta_k}{z_3-z_k}\right)
        (1,z_3,\dots,z_3^{n-3})+\frac {\zeta_3}2 (1,z_3,\dots,z_3^{n-4}) A^{-1}B \right)\\
        -\frac 2{\prod_{j\neq 4}(z_4-z_j)} \left( 1,z_4,\ldots,z_4^{n-3} \right)\\
        \ldots\\
        -\frac 2{\prod_{j\neq n}(z_n-z_j)} \left( 1,z_n,\ldots,z_n^{n-3} \right)
    \end{pmatrix}
\end{align*}
so that writing the first row as a sum, we can write $\det N=\det N' + \det N''$, where
\begin{align*}
    N'=\begin{pmatrix}
        -\frac 2{\prod_{j\neq 3}(z_3-z_j)} \left( \left(1+\frac 12 \sum_{k=1}^2 \frac {\zeta_3\zeta_k}{z_3-z_k}\right)
        (1,z_3,\dots,z_3^{n-3}) \right)\\
        -\frac 2{\prod_{j\neq 4}(z_4-z_j)} \left( 1,z_4,\ldots,z_4^{n-3} \right)\\
        \ldots\\
        -\frac 2{\prod_{j\neq n}(z_n-z_j)} \left( 1,z_n,\ldots,z_n^{n-3} \right)
    \end{pmatrix}
\end{align*}
and
\begin{align*}
    N''=\begin{pmatrix}
        -\frac 1{\prod_{j\neq 3}(z_3-z_j)} \left( \left( \sum_{k=4}^n \frac {\zeta_3\zeta_k}{z_3-z_k}\right)
        (1,z_3,\dots,z_3^{n-3})+\zeta_3 (1,z_3,\dots,z_3^{n-4}) A^{-1}B \right)\\
        -\frac 2{\prod_{j\neq 4}(z_4-z_j)} \left(1,z_4,\ldots,z_4^{n-3} \right)\\
        \ldots\\
        -\frac 2{\prod_{j\neq n}(z_n-z_j)} \left( 1,z_n,\ldots,z_n^{n-3} \right)
    \end{pmatrix}.
\end{align*}
It is immediate to see that $\det A/\det N'$ gives the expression \eqref{Ber}. So we just need to prove that $\det N''=0$. This is equivalent to say that
$\det \tilde N=0$, where
\begin{align*}
    \tilde N=\begin{pmatrix}
        \left( \sum_{k=4}^n \frac {\zeta_3\zeta_k}{z_3-z_k}\right)
        (1,z_3,\dots,z_3^{n-3})+\zeta_3 (1,z_3,\dots,z_3^{n-4}) A^{-1}B \\
        (1, z_4,\ldots,z_4^{n-3}) \\
        \ldots\\
        (1,z_n,\ldots,z_n^{n-3} )
    \end{pmatrix}.
\end{align*}
Now, we observe that the first line has the form $\sum_{k=4}^n \chi_k \zeta_3\zeta_k$, where $\chi_k$ is a row. So, it is sufficient to prove that
$\det \tilde N_k=0$ for each $k$, where $\tilde N_k$ is obtained by replacing $\chi_k$ in place of the first row of $\tilde N$. By the structure of the element of the matrix $\tilde N$, we see that its determinant is cyclically symmetric in the $z_k$, $k>3$, so it is sufficient to prove the vanishing result just for one given $k$, say $k=4$. We also notice that in $A^{-1}B$ the common denominators in $A$ and $B$ cancel out. It is also convenient to notice that of the rows of $B$ only the last terms in the parenthesis of $B_{jb}$ contribute to the determinant, since the first terms give rows that are combinations of the remaining rows. Therefore, we can replace the matrix $A^{-1}B$ with $\tilde A^{-1}\tilde B_4$, where
\begin{align*}
    \tilde A = \begin{pmatrix}
        1 & z_4 & z_4^2&\ldots & z_4^{n-4} \\
        1 & z_5 & z_5^2&\ldots & z_5^{n-4} \\
        \vdots&\vdots&\vdots & \ddots & \vdots\\
        1 & z_n & z_n^2&\ldots & z_n^{n-4} \\
    \end{pmatrix}
\end{align*}
is a Vandermonde matrix (up to the change of sign in the second column), and
\begin{align*}
    \tilde B_4= \begin{pmatrix}
        0 & -\zeta_4 & -2\zeta_4 z_4&\ldots & -(n-3) \zeta_4 z_4^{n-4} \\
        0&0&0 & \ldots & 0 \\
        \vdots&\vdots&\vdots & \ddots & \vdots\\
        0&0&0 & \ldots & 0
    \end{pmatrix}.
\end{align*}
Thus, we see that
\begin{align*}
    \det \tilde N_4=\det \tilde N_4'+\det \tilde N_4''\,,
\end{align*}
where
\begin{align*}
    \tilde N_4'=\begin{pmatrix}
        \frac {\zeta_3 \zeta_4}{z_3-z_4}
        (1,z_3,\dots,z_3^{n-3}) \\
        1,z_4,\ldots,z_4^{n-3} \\
        \ldots\\
        1,z_n,\ldots,z_n^{n-3} 
    \end{pmatrix}
\end{align*}
and
\begin{align*}
    \tilde N_4''=\begin{pmatrix}
        \zeta_3 (1,z_3,\dots,z_3^{n-4}) \tilde A^{-1}\tilde B_4 \\
        1,z_4,\ldots,z_4^{n-3} \\
        \ldots\\
        1,z_n,\ldots,z_n^{n-3} 
    \end{pmatrix}.
\end{align*}
We now prove that $\det \tilde N_4''=-\det \tilde N_4'$. Factoring out $\tfrac{\zeta_3 \zeta_4}{z_3-z_4}$ from the first row of $\tilde N_4'$ gives a Vandermonde matrix, so that the determinant is equal to
\begin{align}
    \det \tilde N_4'=\frac {\zeta_3 \zeta_4}{z_3-z_4} \prod_{3\leq j<k\leq n} (z_k-z_j)=-\zeta_3 \zeta_4 \prod_{l=5}^n (z_l-z_3) \prod_{4\leq j<k\leq n} (z_k-z_j).
    \label{det N'}
\end{align}
We can now compute $\det \tilde N_4''$. To this end, we notice that the entry number $k$ of the first row of $\tilde N_4''$ is $\sum_{a=1}^{n-3} z_3^{a-1} \tilde A^{-1}_{a1}(\tilde B_4)_{1k}$. This can be computed as follows. Since $\tilde A^{-1}_{a1}\det \tilde A $ is the co-factor of the matrix $\tilde A$ in position $a1$, we see that
$\sum_{a=1}^{n-3} z_3^{a-1} \tilde A^{-1}_{a1}\det \tilde A$ is the determinant of the matrix $\bar A$ obtained by replacing $z_4$ with $z_3$ in the first row of $\tilde A$. Since both $\tilde A$ and $\tilde A'$ are Vandermonde matrices, we get
\begin{align*}
    \sum_{a=1}^{n-3} z_3^{a-1} \tilde A^{-1}_{a1}=\frac {\det \tilde A'}{\det \tilde A}=\prod_{j=5}^n \frac {z_j-z_3}{z_j-z_4},
\end{align*}
and so
\begin{align}
    \det \tilde N_4''=&-\zeta_3 \zeta_4 \frac {\prod_{l=5}^n (z_l-z_3)}{\prod_{j=5}^n (z_j-z_4)} \det
    \begin{pmatrix}
        (0,1,2z_4,\dots,(n-3) z_4^{n-4}) \\
        (1,z_4,\ldots,z_4^{n-3}) \\
        \ldots\\
        (1,z_n,\ldots,z_n^{n-3} )
    \end{pmatrix}. \label{det N''}
\end{align}
Now, all co-factors of the first line of the last matrix are of Vandermonde type and, in particular, are thus divisible by $\prod_{j=5}^n (z_j-z_4)$. It follows that $\det \tilde N_4''$ is a polynomial in the $z_j$ of the same degree as $\det \tilde N_4'$, which vanishes when $z_j=z_3$ (for some $j\ge 5$) or when $z_j=z_k$ (for some $5\le j<k\le n$). Since the determinant $\det \tilde N_4'$ is simply the product of these linear factors, and $\det\tilde N_4''$ is a polynomial of the same degree divisible by the same degree, it follows that $\det \tilde N_4''/\det \tilde N_4'$ is just a constant, which we can compute by evaluating for any specific values of $z_j$. For $z_4=0$ we get
\begin{align}
    \det \tilde N_4''=&-\zeta_3 \zeta_4 \frac {\prod_{l=5}^n (z_l-z_3)}{\prod_{j=5}^n z_j} \det
    \begin{pmatrix}
        (0,1,0,\dots,0) \\
        (1,0,\ldots,0) \\
        (1,z_5,\ldots,z_5^{n-3} )\\
        \ldots\\
        (1,z_n,\ldots,z_n^{n-3} )
    \end{pmatrix}=\zeta_3 \zeta_4 \frac {\prod_{l=5}^n (z_l-z_3)}{\prod_{j=5}^n z_j} \det
    \begin{pmatrix}
        (z_n^2,\ldots,z_5^{n-3} )\\
        \ldots\\
        (z_n^2,\ldots,z_n^{n-3} )
    \end{pmatrix}\cr
    =&\zeta_3 \zeta_4 \prod_{l=5}^n (z_l-z_3)\prod_{j=5}^n z_j \prod_{5\leq j<k \leq n} (z_k-z_j),
\end{align}
which coincides with the evaluation of $-\det \tilde N_4'$, computed in \eqref{det N'}, for $z_4=0$.

\section{The $z_1 = \infty$ case}
\label{infty}

Here we consider the  $z_1=\infty$ case, or, equivalently, of $(w \, \vert \, \eta) = (0
\, \vert \, \eta_1)$ in the coordinates $(w \, \vert \, \eta) = \left(\tfrac1z\,, \vert \, \tfrac{\zeta}{z}\right)$ in the chart on $\PP^{1|1}$ centered at $z = \infty$. We
will deduce the resulting expression \eqref{Ber3} in that case of \Cref{1} by letting $z_1 \to \infty$ in formula \eqref{main}, which
gives the expression of the measure for a finite value of $z_1$. More precisely, we will let $(w_1 \, | \, \eta_1)
\to (0 \, | \, \eta_1)$ (or simply $w_1 \to 0$) for $(w_1 \, \vert \,
\eta_1) = \left(\tfrac{1}{z_1} \, \vert \, \tfrac{\zeta_1}{z_1}\right)$.

% V_1 (0|\eta_1) \left[ \, \eta [dw | d\eta]\left\vert_{\NS_1} \mid [dw | d\eta]\left\vert_{\NS_1} \right. \right. \right]

There is one subtlety, though. We  want
% the limit as $w_1 \to 0$ of
the vertex-operator insertion 
\begin{equation}
  \label{limit}
V_1 (z_1 | \zeta_1) \left[ \, \zeta \vol \! \bigm|_{\NS_1} 
  \Bigm| \, \vol \! \bigm|_{\NS_1} \, 
  \right] = V_1 \left( \tfrac{1}{w_1} \Big| \tfrac{\zeta_1}{w_1} \right) \tfrac{1}{w_1}\left[ \eta [dw | d\eta]
   \! \bigm|_{\NS_1} \; \Bigm| \; [dw |
      d\eta] \! \bigm|_{\NS_1} \right]
\end{equation}
% $V_1 (z_1 | \zeta_1) \left[ \, \zeta [dz | d\zeta]\left\vert_{\NS_1} \mid [dz | d\zeta]\left\vert_{\NS_1} \right. \right. \right]$
at $\NS_1 = \{w - w_1 + \eta \eta_1 = 0\}$ in \eqref{VO} to converge to a finite vertex operator
\[
\widetilde{V_1} (0|\eta_1) \Big[ \, \eta [dw | d\eta]\vert_{\NS_1} \Big| [dw | d\eta]\vert_{\NS_1} \Big]
\]
as $w_1 \to 0$, i.e., we want
\[
\lim_{w_1 \to 0} V_1 \left( \tfrac{1}{w_1} \Big| \tfrac{\zeta_1}{w_1} \right) \tfrac{1}{w_1} = \widetilde{V_1} (0|\eta_1) \, .
\]
Here we again drop the vacuum expectation value brackets $\langle \, \rangle$ from the notation.

On the other hand,
\begin{equation*}
%  \label{main}
d \nu = - \frac{1}{2^{n-2} w_1} \left(z_3 w_1 - 1 -\tfrac 12
\zeta_3\eta_1\right) \left(z_3-z_2 - \tfrac 12
\zeta_3\zeta_2\right) [dz_4\dots dz_n\,|\,d\zeta_3\dots d\zeta_n]
\end{equation*}
and hence
\begin{align*}
%  \label{main}
V_1 (z_1 | \zeta_1) d \nu & = - \frac{V_1 \big( \frac{1}{w_1} \mid \frac{\zeta_1}{w_1} \big)}{2^{n-2} w_1} \left(z_3 w_1 - 1 -\tfrac 12
\zeta_3\eta_1\right) \left(z_3-z_2 - \tfrac 12
\zeta_3\zeta_2\right) [dz_4\dots dz_n\,|\,d\zeta_3\dots d\zeta_n]\\
& \to \quad \widetilde{V_1} (0|\eta_1) \tfrac{1}{2^{n-2}} \left(1 + \tfrac 12
\zeta_3\eta_1\right) \left(z_3-z_2 - \tfrac 12
\zeta_3\zeta_2\right) [dz_4\dots dz_n\,|\,d\zeta_3\dots d\zeta_n]
\end{align*}
as $w_1 \to 0$. In
\Cref{1}, we use $V_1$ in lieu of $\widetilde{V_1}$ as well as slightly abuse notation and denote the above limit by $V_1(0 | \eta_1) d\nu$.


\begin{thebibliography}{10}
%             \bibliography{appl}

\bibitem{VorSuperMum} A. A. Voronov, ``A formula for the Mumford measure in superstring theory,'' Funk.\ Anal.\ Prilozh. \textbf{22} (1988), no.\ 2, 67--68.

\bibitem{Belavin-Knizhnik} A.~A. Belavin  and V.~G Knizhnik, ``Algebraic Geometry and the Geometry of Quantum Strings,'' Phys.\ Lett.\ B \textbf{168} (1986), 201-206.

\bibitem{d'hoker-phong-86} E. D'Hoker and D. Phong, ``Multiloop amplitudes for the bosonic Polyakov string,'' Nuclear Phys.\ B \textbf{269} (1986), no. 1, 205--234.

\bibitem{Jost's book} J.~Jost, ``Bosonic Strings: A Mathematical Treatment,'' AMS, 2001.
% , ISBN 978-0-8218-4336-9

\bibitem{Witten's notes} M. B. Green, J. H. Schwarz, and E. Witten, ``Superstring Theory Vol. 1: 25th Anniversary Edition,'' Cambridge University Press, 2012,
ISBN 978-1-139-53477-2, 978-1-107-02911-8 doi:10.1017/CBO9781139248563

\bibitem{polchinskibook1} J. Polchinski, ``String theory. Vol. 1: An introduction to the bosonic string,'' Cambridge University Press, 2007,
ISBN 978-0-511-25227-3, 978-0-521-67227-6, 978-0-521-63303-1 doi:10.1017/CBO9780511816079

\bibitem{Witten} E. Witten, ``Perturbative superstring theory revisited,'' Pure Appl. Math. Quart. \textbf{15} (2019) no.1, 213-516

\bibitem{FKP1} G. Felder, D. Kazhdan, and A. Polishchuk, ``Regularity of the superstring supermeasure and the superperiod map,'' Selecta Math.\ (N.S.) \textbf{28} (2022), no.\ 1, Paper No.\ 17, 64 pp.

\bibitem{FKP2} G.~Felder, D.~Kazhdan and A.~Polishchuk, ``The moduli space of stable supercurves and its canonical line bundle,'' [arXiv:2006.13271 [math.AG]]

\bibitem{BruzzoPolishchukRuiperez} U.~Bruzzo, D.~H.~Ruiperez and A.~Polishchuk, ``Notes on fundamental algebraic supergeometry. Hilbert and Picard superschemes,'' Adv. Math. \textbf{415} (2023), 108890 doi:10.1016/j.aim.2023.108890

\bibitem{Manin.1986.criticaldim} Yu. I. Manin, ``Critical dimensions of the string theories and the dualizing sheaf on the moduli space of (super) curves,'' Funk.\ Anal.\ Prilozh.\ \textbf{20} (1986), no.\ 3, 88--89.

\bibitem{Diroff} D. Diroff, ``On the super Mumford form in the presence of Ramond and Neveu\textendash{}Schwarz punctures,'' J. Geom. Phys. \textbf{144} (2019), 273-293.

\bibitem{Ott-Voronov} N. Ott and A.~A. Voronov, ``The supermoduli space of genus zero super Riemann surfaces with Ramond punctures,'' J. Geom. Phys. \textbf{185} (2023), 104726

\bibitem{Donagi-Ott} R. Donagi, and N. Ott, ``Supermoduli space with Ramond punctures is not projected,'' Preprint (2023) available at \url{https://arxiv.org/pdf/2308.07957.pdf}.

\bibitem{Donagi-Witten} R. Donagi, and E. Witten, ``Supermoduli space is not projected,'' Proc. Symp. Pure Math. \textbf{90} (2015), 19--72.

\bibitem{ListTextbooksOnStrings} M. B. Green, J. H. Schwarz, and E. Witten E, ``Superstring Theory Vol. 2: 25th Anniversary Edition,'' Cambridge University Press, 2012,
ISBN 978-1-139-53478-9, 978-1-107-02913-2 doi:10.1017/CBO9781139248570;\\
J. Polchinski, ``String theory. Vol. 2: Superstring theory and beyond,'' Cambridge University Press, 2007,
ISBN 978-0-511-25228-0, 978-0-521-63304-8, 978-0-521-67228-3 doi:10.1017/CBO9780511618123

\bibitem{d'hoker-phong-2000s}
E. D'Hoker, and D.H. Phong, ``Two loop superstrings. 1. Main formulas,''
Phys. Lett. B \textbf{529} (2002), 241-255;\\
E. D'Hoker, and D.H. Phong, ``Two loop superstrings. 2. The Chiral measure on moduli space,''
Nucl. Phys. B \textbf{636} (2002), 3-60;\\
E. D'Hoker, and D.H. Phong,`` Two loop superstrings. 3. Slice independence and absence of ambiguities,''
Nucl. Phys. B \textbf{636} (2002), 61-79;\\
E. D'Hoker, and D.H. Phong, ``Two loop superstrings 4: The Cosmological constant and modular forms,''
Nucl. Phys. B \textbf{639} (2002), 129-181;\\
E. D'Hoker, and D.H. Phong, ``Two-loop superstrings. V. Gauge slice independence of the N-point function,''
Nucl. Phys. B \textbf{715} (2005), 91-119;\\
E. D'Hoker, and D.H. Phong, ``Two-loop superstrings VI: Non-renormalization theorems and the 4-point function,''
Nucl. Phys. B \textbf{715} (2005), 3-90;\\
E. D'Hoker, and D.H. Phong, ``Two-Loop Superstrings. VII. Cohomology of Chiral Amplitudes,''
Nucl. Phys. B \textbf{804} (2008), 421-506.

\bibitem{Manin.1991.ting} Yu.\ I. Manin, ``Topics in noncommutative geometry,'' In: M. B. Porter Lectures, Princeton University Press, Princeton, NJ, 1991.

\bibitem{Witten.2019.nosrsatm} E. Witten, ``Notes on super Riemann surfaces and their moduli,'' Pure Appl.\ Math.\ Q. 15 (2019), no. 1, 57--211.

\bibitem{Friedan-Martinec-Shenker1} D. Friedan, S.~H. Shenker, and Martinec E.~J., ``Covariant Quantization of Superstrings,'' Phys. Lett. B \textbf{160} (1985), 55-61

\bibitem{Friedan-Martinec-Shenker2} D. Friedan, S.~H. Shenker, and Martinec E.~J., ``Conformal Invariance, Supersymmetry and String Theory,'' Nucl. Phys. B \textbf{271} (1986), 93-165

\bibitem{Knizhnik} V.~G.~Knizhnik, ``Covariant Superstring Fermion Amplitudes From the Sum Over Fermionic Surfaces,'' Phys. Lett. B \textbf{178} (1986), 21-28

\bibitem{etc} J.~H. Schwarz, ``Superstring Theory,'' Phys. Rept. \textbf{89} (1982), 223--322.

\bibitem{Freeman-West} M.~D.~Freeman and P.~C.~West, ``Ramond String Scattering in the Group Theoretic Approach to String Theory,'' Phys. Lett. B \textbf{217} (1989), 259--266.

\bibitem{knudsen-mumford:det} F. Knudsen, and D. Mumford, ``The Projectivity of the Moduli Space of Stable Curves I: Preliminaries on `Det' and `Div.' '' Mathematica Scandinavica 39, no. 1 (1976): 19--55.

\bibitem{deligne:detcoh} P. Deligne, ``Le D\'eterminant de la Cohomologie,'' Current Trends in Arithmetical Algebraic Geometry, 93--177, Contemp. Math. 67, Amer. Math. Soc., Providence, RI, 1987.

\bibitem{deligne:lost} P. Deligne, Letter to Yu.\ I. Manin, 1988.

\bibitem{LeBrun.Rothstein.1988.moduli} C.~Lebrun and M.~Rothstein, ``Moduli of Superriemann Surfaces,'' Commun. Math. Phys. 117 (1988), 159--176.

\bibitem{Maxwell} K.~A.~Maxwell, ``The super Mumford form and Sato Grassmannian,'' J. Geom. Phys. 180 (2022), 104604.

\bibitem{Beilinson.Manin.Schechtman.1987} A. A. Beilinson, Yu. I. Manin, and V. V. Schechtman, ``Sheaves of the Virasoro and Neveu-Schwarz algebras.'' In: Manin, Yu.\ I. (eds) K-Theory, Arithmetic and Geometry. Lecture Notes in Mathematics, Vol.\ 1289, Springer, Berlin, Heidelberg, 1987, \url{https://doi.org/10.1007/BFb0078367}.

\bibitem{Manin.1988} Yu.\ I. Manin, ``Neveu-Schwarz sheaves and differential equations for Mumford superforms,'' Journal of Geometry and Physics 5 (1988), no.\ 2, 161--181.

\bibitem{RSV.1988} A. A. Rosly, A. S. Schwarz, and A. A. Voronov, ``Geometry of Superconformal Manifolds,'' Commun.\ Math.\ Phys.\ 119 (1988), 129--152.

\bibitem{Hartshorne} R. Hartshorne, ``Algebraic Geometry,'' In: Graduate Texts in Mathematics, Vol.\ 52, Springer, New York, 1977.

\bibitem{Witten.2015.NotesOnLowGenus} E.~Witten, ``Notes on holomorphic string and superstring theory measures of low genus.'' In: Analysis, complex geometry, and mathematical physics: in honor of Duong H. Phong. Contemp.\ Math., Vol.\ 644, American Mathematical Society, Providence, RI, 2015, 307--359.


\end{thebibliography}
\end{document}